\documentclass[12pt]{article}

\usepackage[T1]{fontenc}
\usepackage[active]{srcltx}
\usepackage{lmodern}
\usepackage{amsmath,amstext,amssymb,graphicx,graphics,color}
\usepackage{mathtools}
\usepackage{todonotes}
\usepackage{tikz}
\usepackage{mleftright}
\usepackage{pgfplots}
\usepackage{theorem}
\usepackage{cite}               
\usepackage{hyperref}           
\usepackage{bbm}                
\usepackage{fancybox}           
\usepackage[section]{placeins}  
\usepackage[font=small,margin=1em]{caption}
\usepackage{subcaption}
\usepackage{enumitem}
\usepackage[margin=2.75cm]{geometry}
\theorembodyfont{\normalfont\slshape} 
\newtheorem{definition}{Definition}
\newtheorem{theorem}{Theorem}
\newtheorem{proposition}{Proposition}[section]

\newtheorem{lemma}[proposition]{Lemma}
\theoremstyle{break} 

\newenvironment{proof}%
{{\par\noindent \bf Proof. \nobreak}}%
{\nobreak \removelastskip \nobreak \hfill $\Box$ \medbreak}
{{\par\noindent \bf Proof \nobreak}}%
{\nobreak \removelastskip \nobreak \hfill $\Box$ \medbreak}
{{\par\noindent \bf Proof lemma. \nobreak}}%
{\nobreak \removelastskip \nobreak \bf End proof lemma. \medbreak}
\newenvironment{remark}{\par \medskip \noindent {\bf Remark. }\nobreak}{\par \medskip}

\usepackage{fancyhdr}
 \fancypagestyle{plain}{
  \fancyhead{}
  
}

\def\paragraph#1{{\bf #1\ }}
\newcommand{\RN}[1]{%
  \textup{\uppercase\expandafter{\romannumeral#1}}%
}
\newcommand{\expo}{\mathrm{e}}

\newcommand{\dd}{\mathrm{d}}
\newcommand{\EE}{\mathbb{E}}

\newcommand{\cP}{\mathcal{P}}
\newcommand{\cA}{\mathcal{A}}
\newcommand{\cS}{\mathcal{S}}
\newcommand{\overbar}[1]{\mkern 1.5mu\overline{\mkern-1.5mu#1\mkern-1.5mu}\mkern 1.5mu}
\usepackage[utf8]{inputenc}
\usepackage{unicode_character}

\title{The incremental voter model: mean-field analysis and convergence to equilibrium}

\author{Fei Cao \footnotemark[1] \and Xiaoqian Gong \footnotemark[1]}

\begin{document}
\maketitle

\footnotetext[1]{Amherst College - Department of Mathematics, Amherst, MA 01002, USA}

\tableofcontents

\begin{abstract}
We introduce the incremental voter model (IVM), a discrete-opinion multi-agent system where agents undergo step-wise transitions biased by the opinion of a randomly selected persuader. Our incremental voter model comprises a large population of interacting agents, each holding an opinion represented by an element of the discrete set $\{-k,\ldots,0,\ldots,k\}, k \in \mathbb{N}_{+}$. At each update step as time progresses, a pair of distinct agents are selected independently and uniformly at random from the population, and the first agent (viewed as the ``listener'') updates its opinion based on that of the second (viewed as the ``persuader''), adopting a new opinion that differs from its current one by at most one unit.
By deriving the mean-field system of nonlinear ordinary differential equations (ODEs) that governs the large-population limit of the agent-based model, we develop a rigorous mathematical framework to study the asymptotic behavior of the opinion distribution in the mean-field limit. These results contribute to a deeper understanding of social influence processes in complex systems, particularly in modeling opinion polarization, and may guide the formulation of more advanced models in future research.
\end{abstract}

\noindent {\bf Key words: Agent-based model, Mean-field, Multi-agent dynamics, Opinion model, Sociophysics}

\section{Introduction}\label{sec:sec1}
\setcounter{equation}{0}

In recent years, the study of opinion dynamics has garnered growing interest due to its broad applicability in domains such as political science, digital culture, and public health. The mathematical investigation of how individuals form and adjust opinions within a population can be traced back to at least the 1960s. Since then, the infusion of methods from statistical physics into the social sciences has led to powerful new frameworks for modeling complex collective behavior, giving rise to the interdisciplinary fields of sociophysics and econophysics \cite{bennaim_2005,toscani_2006}. Pioneered by Galam, Gefen, and Shapir in \cite{galam_gefen_shapir_1982}, sociophysics seeks to understand the dynamics of social interaction through probabilistic models and dynamical systems.

Since the early 2000s, this field has expanded rapidly, fueled by the development and rigorous analysis of several influential models \cite{deffuant_mixing_2000,hegselmann_opinion_2002,sznajd_opinion_2000}. Among the most prominent are the Deffuant (or bounded confidence) model \cite{deffuant_mixing_2000}, the Krause–Hegselmann model \cite{hegselmann_opinion_2002}, and the Sznajd model \cite{sznajd_opinion_2000}, each of which has inspired numerous extensions. The literature on opinion dynamics continues to grow \cite{cao_iterative_2024,castellano_statistical_2009,jabin_clustering_2014,naldi_mathematical_2010,sen_sociophysics_2014}, reflecting sustained interest in understanding mechanisms of consensus formation, polarization, and clustering.

The present study is principally inspired by the recent work \cite{cao_fractal_2025} and is further informed by the iterative persuasion-polarization framework introduced in \cite{cao_iterative_2024}, where a novel stochastic agent-based opinion model is analyzed using a combination of probabilistic and analytical tools.


Our proposed agent-based opinion dynamics is an interacting multi-agent system consisting of a group of $N$ agents (with $N \in \mathbb{N}_+$), in which agents engage in pair-wise interaction and adjust their opinions accordingly. Specifically, in our incremental voter model, we consider a population of $N$ indistinguishable agents, each holding an opinion represented by an integer valued in a prescribed admissible opinion space $\Omega \coloneqq \{-k,\ldots,0,\ldots,k\}$, where $k \in \mathbb{N}_+$ is some fixed model parameter. We denote by $X^{i,N}_t$ the opinion of agent $i$ at time $t\geq 0$, and sometimes we suppress the subscript $t$ for notational simplicity. From a practical perspective, we may interpret $X^{i,N}$ as the political orientation/stance of agent $i$, with $-k$ and $k$ representing extreme left-wing and right-wing opinions, respectively. The dynamics of our agent-based opinion model are detailed as follows:

\begin{itemize}
\item At each random time generated by a Poisson clock with rate $N/2$, a pair of distinct agents $(i,j) \in \{1,\cdots,N\}^2$ with $i\neq j$ is selected uniformly at random and independently of the past. In each interaction, agent $j$ serves as the ``persuader'' and agent $i$ as the ``listener'', indicating that agent $j$ attempts to influence the opinion state of agent $i$.
\item With probability $\frac{k + X^{j,N}}{2k}$, agent $i$ increases its opinion by one unit, provided that $X^{i,N} < k$; with the complementary probability $\frac{k - X^{j,N}}{2k}$, agent $i$ decreases its opinion by one unit, provided that $X^{i,N} > -k$.
\end{itemize}
Mathematically, if the pair of agents $(i,j)$ are selected to interact at time $t$, then the opinion of agent $i$ will be updated according to
\begin{equation}
\label{eq:dynamics}
X_t^{i,N} = \begin{cases}
\left(X_{t^-}^{i,N} + 1\right)\mathbbm{1}\{X_{t^-}^{i,N} < k\} &~~ \textrm{with probability}~~ \frac 12 + \frac{X_{t^-}^{j,N}}{2k}, \\
\left(X_{t^-}^{i,N} - 1\right)\mathbbm{1}\{X_{t^-}^{i,N} > -k\} &~~ \textrm{with probability}~~ \frac 12 - \frac{X_{t^-}^{j,N}}{2k},
\end{cases}
\end{equation}
where \(\mathbbm{1}\) is the indicator function.
Loosely speaking, the update rule \eqref{eq:dynamics} biases agent $i$'s opinion based on agent $j$'s opinion level. This results in an asymmetric, directional influence model, where agents tend to move in the direction of other agents' current opinion. In other words, more extreme agents are more persuasive/influential. To highlight the novelty and the key features of our incremental voter model, we emphasize that while several classical models in the literature explore opinion dynamics with discrete opinion spaces and influence mechanisms, none of them align precisely with the structure and dynamics of our model:

\begin{itemize}
\item In a typical \textbf{voter model} \cite{clifford_model_1973,holley_ergodic_1975,heinrich_conformity_2025}, agents adopt the opinion of a randomly selected neighbor, leading to consensus over time. This model lacks the probabilistic bias based on the influencer's opinion magnitude present in our model.
\item The \textbf{Sznajd model} \cite{sznajd_opinion_2000,sznajd_review_2021} focuses on group influence where a pair of agreeing agents can convince their neighbors. The influence is based on agreement rather than the magnitude of opinions.
\item  In several \textbf{bounded confidence models} \cite{deffuant_mixing_2000,deffuant_2002,li_bounded_2020},  agents interact only if their opinions are within a certain threshold, and they adjust their opinions by averaging. These models typically use continuous opinion spaces and symmetric interactions.
\end{itemize}

The main novelty of our proposed opinion dynamics lies in the introduction of a distinctive mechanism whereby the probability of an agent updating their opinion is directly proportional to the opinion value of another agent. This feature implies that agents holding more extreme opinions exert greater influence on others, thereby fostering the emergence of phenomena such as opinion polarization and/or opinion clustering.

The remainder of this manuscript is organized as follows. Section~\ref{subsec:sec2.1} is devoted to the derivation of the mean-field formulation of our agent-based opinion dynamics in the large-population limit $N \to \infty$, leading to the coupled system of nonlinear Boltzmann-type ODEs \eqref{eqn:ODE_BA}. In section~\ref{subsec:sec2.2} we analyze the long-time behavior of the finite-agent model, keeping the number of agents $N$ fixed. Section~\ref{subsec:sec3.1} focuses on the mean-field dynamics in the special case where the opinion space consists of three opinion states, while section~\ref{subsec:sec3.2} extends the analysis to the general case of finitely many admissible opinions. Finally, section~\ref{sec:sec4} concludes the manuscript and presents several open and challenging problems for future research.

\section{The mean-field incremental voter model}\label{sec:sec2}
\setcounter{equation}{0}

\subsection{Derivation of the mean-field dynamical system}\label{subsec:sec2.1}
Throughout the rest of this manuscript, we assume that the collection of initial conditions $X_0^{1,N},\ldots,X_0^{N,N}$ are independent and ${\bf p}(0)$-distributed, where ${\bf p}(0)$ is a given probability mass function on $\Omega$. We also let ${\bf X}_t^N = (X_t^{1,N},\ldots,X_t^{N,N})$ to represent the state of the multi-agent system at time $t\geq 0$. Our goal is to derive a mean-field system that characterizes the behavior of a representative agent in our incremental voter model under the large-population limit $N \to \infty$. First of all, it is rather natural (via a heuristic mean-field argument by assuming the statistical independence of the pair of interacting agents) to expect, that the mean-field limit of the incremental voter model \eqref{eq:dynamics} as the number of agent tends to infinity should be given by

\begin{equation}\label{eqn:ODE_BA}
\left\{
\begin{aligned}
p'_{-k}(t) & = \frac{k-m(t)}{2k}\,p_{-k+1}(t) - \frac{k+m(t)}{2k}\,p_{-k}(t), \\
p'_n(t) & = \frac{k-m(t)}{2k}\,p_{n+1}(t) + \frac{k+m(t)}{2k}\,p_{n-1}(t) - p_n(t), ~~-k<n<k, \\
p'_k(t) & = \frac{k+m(t)}{2k}\,p_{k-1}(t) - \frac{k-m(t)}{2k}\,p_k(t),
\end{aligned}\right.
\end{equation}
where $t \geq 0$  and $m(t) \coloneqq \sum_{-k\leq n\leq k} n\,p_n(t)$. In the aforementioned mean-field description of the incremental voter model, the probability vector ${\bf p}(t) \coloneqq \left(p_{-k}(t),\ldots,p_k(t)\right)$ captures the distribution of opinions among an infinity large cloud of population and $p_n(t)$ represents the probability that a typical agent holds opinion $n$ at time $t$. With these interpretations, $m(t)$ represents the average opinion (in the mean-field regime) at time $t$, and $(k+m(t))/(2k)$ (respectively $(k-m(t))/(2k)$) denotes the average probability that a listener agent increments (respectively decrements) its opinion by one unit (if possible). The rigorous convergence of a finite multi-agent system to its mean-field limit under the large population limit $N \to \infty$ is commonly referred to as propagation of chaos \cite{sznitman_topics_1991}. This notion has been extensively studied in a broad range of disciplines, including social, economic, biological, and physical sciences \cite{cao_entropy_2021,cao_interacting_2022,cao_uncovering_2022,cao_uniform_2024,cao_wealth_2026,carlen_kinetic_2013,cortez_quantitative_2016,degond_macroscopic_2004}. In particular, it has played a central role in the derivation and analysis of kinetic models arising from socio-economic applications, especially those governed by Boltzmann-type and Fokker-Planck type equations \cite{cao_bennati_2025,during_boltzmann_2008,matthes_steady_2008}.

The rigorous derivation of the mean field system of nonlinear ODEs \eqref{eqn:ODE_BA} can be carried out using a probabilistic coupling approach, akin to those employed in several recent studies \cite{cao_derivation_2021,cao_fractal_2025}. For the ease of presentation, we first review some basic elements from probability theory. We adopt the notation $\mathcal{L}({\bf X})$ to denote the law of a generic random variable or vector ${\bf X}$. Also, we will quantify convergence of probability measures via the normalized Wasserstein distance (of order $1$): for probability mass functions ${\bf p}, {\bf q}$ on $\Omega^N$, it is defined by
\begin{equation}\label{def:Wasserstein}
\mathcal{W}_1({\bf p}, {\bf q}) \coloneqq \inf\limits_{{\bf X},{\bf Y}} \mathbb{E}\left[\frac{1}{N}\sum\limits_{i=1}^N |X^i - Y^i| \right],
\end{equation}
where the infimum is taken over all possible couplings of ${\bf p}$ and ${\bf q}$, or equivalently over all pair of random vectors ${\bf X} = (X^1,\ldots,X^N) \sim {\bf p}$ and ${\bf Y} = (Y^1,\ldots,Y^N) \sim {\bf q}$.

Now, we observe that the incremental voter model \eqref{eq:dynamics} can be described via the following system of SDEs driven by Poisson point measures:  \\

\begin{equation}
\label{eq:SDE_Xi}	
\begin{aligned}
\dd X_t^{i,N}
&= \sum_{j\neq i} \int_0^1 \left[\mathbbm{1}\left\{u \leq \frac{k + X_{t-}^{j,N}}{2k} \right\}\,\mathbbm{1}\left\{X_{t-}^{i,N} < k\right\} \right. \\
& \qquad \left. {} - \mathbbm{1}\left\{u >\frac{k + X_{t-}^{j,N}}{2k}\right\}\,\mathbbm{1}\left\{X_{t-}^{i,N} > -k\right\}
	\right] \, \cP^{ij}(\dd t, \dd u),
\end{aligned}
\end{equation}
where $\left\{\cP^{ij}(\dd t, \dd u)\right\}_{\substack{1\leq i,j\leq N\\ i\neq j}}$ is a collection of i.i.d. Poisson point measures on $[0,\infty) \times [0,1]$ with intensity $\frac{\dd t \, \dd u}{N-1}$.

For each $i$, we introduce a master clock for agent $i$ by setting
\begin{equation*}	
\cP^i(\dd t, \dd u) \coloneqq \sum\limits_{j\neq i} \cP^{ij}(\dd t, \dd u),
\end{equation*}
which is a Poisson point measure on $[0,\infty) \times [0,1]$ with intensity $\dd t \, \dd u$. Moreover, the collection $(\cP^i)_{1 \leq i \leq N}$ is again independent. Now we can write the following SDE for the evolution of the agent $i$'s opinion $X^{i,N}$, which is equivalent in law to \eqref{eq:SDE_Xi}:

\begin{equation}
\label{eq:equivalent_SDE_Xi}	
\begin{aligned}
\dd X_t^{i,N}
&= \int_0^1 \left[\mathbbm{1}\left\{u \leq \frac{k + \cA_{t-}^{i,N}}{2k} \right\}\,\mathbbm{1}\left\{X_{t-}^{i,N} < k\right\} \right. \\
& \qquad \left. {} - \mathbbm{1}\left\{u >\frac{k + \cA_{t-}^{i,N}}{2k}\right\}\,\mathbbm{1}\left\{X_{t-}^{i,N} > -k\right\}
	\right] \, \cP^i(\dd t, \dd u),
\end{aligned}
\end{equation}
in which \[\mathcal{A}^{i,N}_t \coloneqq \frac{1}{N-1} \sum\limits_{j \neq i} X^{j,N}_t\] represents the empirical average of the remaining $N−1$ agents aside from agent $i$ at time $t$. Based on the structure of the dynamics \eqref{eq:equivalent_SDE_Xi}, it is straightforward to expect that the mean-field limit process $(Z_t)_{t\geq 0}$ which captures the opinion dynamics of a typical agent in the asymptotic regime of an infinite population, is governed by the following nonlinear SDE:
\begin{equation}
\label{eq:mean_field_SDE_Zi}	
\begin{aligned}
\dd Z_t^i
&= \int_0^1 \left[\mathbbm{1}\left\{u \leq \frac{k + m_t}{2k} \right\}\,\mathbbm{1}\left\{Z_{t-}^i < k\right\} \right. \\
& \qquad \left. {} - \mathbbm{1}\left\{u >\frac{k + m_t}{2k}\right\}\,\mathbbm{1}\left\{Z_{t-}^i > -k\right\}
	\right] \, \cP^i(\dd t, \dd u),
\end{aligned}
\end{equation}
with $m_t = \EE[Z_t^i]$ for all $i$. We emphasize that components $(Z^i_t)_{1\leq i\leq N}$ of ${\bf Z}_t = (Z^1_t,\ldots,Z^N_t)$ are i.i.d. copies of the mean-field process \eqref{eq:mean_field_SDE} defined as the solution of the following SDE:
\begin{equation}
\label{eq:mean_field_SDE}	
\begin{aligned}
\dd Z_t
&= \int_0^1 \left[\mathbbm{1}\left\{u \leq \frac{k + \EE[Z_t]}{2k} \right\}\,\mathbbm{1}\left\{Z_{t-} < k\right\} \right. \\
& \qquad \left. {} - \mathbbm{1}\left\{u >\frac{k + \EE[Z_t]}{2k}\right\}\,\mathbbm{1}\left\{Z_{t-} > -k\right\}
	\right] \, \cP(\dd t, \dd u),
\end{aligned}
\end{equation}
where $\cP(\dd t, \dd u)$ is again a Poisson point measure on $[0,\infty) \times [0,1]$ with intensity $\dd t \, \dd u$. Specifically, each $Z^i$ is a solution of the mean-field SDE \eqref{eq:mean_field_SDE} with $\cP^i$ in place of $\cP$. To proceed, we employ a standard coupling method which is demonstrated in several recent works \cite{cao_derivation_2021,cao_fractal_2025}. The main strategy is to couple the multi-agent interacting system ${\bf X}^N_t = (X^{1,N}_t,\ldots,X^{N,N}_t)$ with the mean-field system ${\bf Z}_t = (Z^1_t,\ldots,Z^N_t)$ where the law of each component $Z^i_t$ is ${\bf p}(t)$ for $1\leq i\leq N$. Indeed, we use the same Poisson point measure $\cP^i(\dd t, \dd u)$ for the evolution of both $X^{i,N}_t$ and $Z^i_t$. For the initial conditions, we set $Z_0^i = X_0^{i,N}$ for all $1\leq i\leq N$ so that we have a synchronous coupling between the multi-agent system ${\bf X}^N_t$ and the system of independent mean-field processes ${\bf Z}_t$. In addition, the collection $(X_t^1, Z_t^1), \ldots (X_t^N, Z_t^N)$ is clearly exchangeable for all $t\geq 0$.

\begin{theorem}[Propagation of chaos]
\label{thm:PoC}
Assume that $X_0^{1,N}, \ldots, X_0^{N,N}$ are i.i.d. and distributed according to ${\bf p}(0) \in \cP(\Omega)$, then there exists a universal constant $C>0$ (independent of $t$) such that for all $t\geq 0$, it holds that
\begin{equation}\label{eq:PoC}
\mathcal{W}_1 \left(\mathcal{L}({\bf X}_t^N) \, , \, {\bf p}^{\otimes N}(t) \right) \leq \frac{C\,\expo^{t/k}}{\sqrt{N}}.
\end{equation}
\end{theorem}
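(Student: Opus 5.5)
We use the synchronous coupling just described: $X^{i,N}$ and $Z^i$ are driven by the same $\cP^i$ and start from $Z^i_0 = X^{i,N}_0$. Since $\mathcal{L}({\bf Z}_t) = {\bf p}^{\otimes N}(t)$, the definition \eqref{def:Wasserstein} gives
\[
\mathcal{W}_1\left(\mathcal{L}({\bf X}^N_t),{\bf p}^{\otimes N}(t)\right) \le \EE\left[\frac1N\sum_i |X^{i,N}_t - Z^i_t|\right] = \EE|X^{1,N}_t - Z^1_t| =: u(t),
\]
where the equality uses exchangeability. It therefore suffices to show $u(t) \le C\expo^{t/k}/\sqrt N$.

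\textbf{Evolution of the coupled difference.} We track $|X^{1,N}_t - Z^1_t|$ along jumps of $\cP^1(\dd t,\dd u)$. At an atom $(t,u)$, both processes move unless one of three things happens:
\begin{itemize}[label={}]
\item (a) $u$ lies between the thresholds $\frac{k+\cA^{1,N}_{t-}}{2k}$ and $\frac{k+m_t}{2k}$, so the two processes move in opposite directions;
\item (b) one process sits at a boundary $\pm k$ and the other does not;
\item (c) both jump the same way and the difference is unchanged.
\end{itemize}
The key monotonicity fact concerns case (b) when both use the same direction. If $X = k > Z$ and the move is upward, $X$ stays put while $Z$ increases, so the difference shrinks. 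Symmetrically at $-k$. So boundary effects never increase $|X-Z|$ when the directions agree. In case (a) the difference changes by at most $2$. This gives
\[
\frac{\dd}{\dd t} u(t) \le 2\cdot \EE\left|\frac{\cA^{1,N}_t - m_t}{2k}\right| = \frac1k\,\EE|\cA^{1,N}_t - m_t|.
\]
The jump measure has unit intensity in $t$, and the $u$-interval for case (a) has length $|\cA-m|/(2k)$. Making this rigorous is a routine compensator computation with Itô's formula for jump processes.

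\textbf{Splitting the error.} We decompose
\[
\cA^{1,N}_t - m_t = \frac{1}{N-1}\sum_{j\ne1}(X^{j,N}_t - Z^j_t) + \left(\frac{1}{N-1}\sum_{j\ne1} Z^j_t - m_t\right).
\]
By exchangeability the first term has expectation at most $u(t)$. The $Z^j$ are i.i.d. with mean $m_t$ and bounded by $k$, so $\mathrm{Var}(Z^j_t)\le k^2$. The second term therefore has $L^1$ norm at most $k/\sqrt{N-1}$ by Cauchy–Schwarz. Combining,
\[
u'(t) \le \frac1k u(t) + \frac{1}{\sqrt{N-1}}, \qquad u(0)=0.
\]
Grönwall then gives $u(t) \le \frac{k}{\sqrt{N-1}}(\expo^{t/k}-1) \le C\expo^{t/k}/\sqrt N$ with $C = \sqrt2\,k$. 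If we want $C$ free of $k$ as well, we note that $u \le 2k$ trivially, but the stated form only requires $C$ to be independent of $t$.

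\textbf{Main obstacle.} The delicate step is the case analysis at the boundaries. We must verify that no configuration with identical $u$ and a same-side decision increases the distance. For instance, take $X=k$ and $Z=k-1$ with the move up: $Z$ moves to $k$ and the distance drops from $1$ to $0$. With the move down, both decrease and the distance is unchanged. Checking all sign and boundary combinations is where care is needed, but each one is a contraction or neutral. The remaining ingredients are a justified expectation of the Poisson-driven stochastic integral, which holds since the integrands are bounded, and the well-posedness of the nonlinear SDE \eqref{eq:mean_field_SDE}. We take the latter as given, since its law solves \eqref{eqn:ODE_BA}, a Lipschitz ODE system on the simplex.
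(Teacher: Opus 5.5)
Your proof is correct and follows the paper's argument essentially step for step. Both use synchronous coupling through the shared clocks $\cP^i$, the bound $\frac{\dd}{\dd t}\EE|X^{1,N}_t-Z^1_t|\le \frac1k\EE|\cA^{1,N}_t-m_t|$ coming from the opposite-direction window of width $|\cA^{1,N}_t-m_t|/(2k)$, the split through $\frac{1}{N-1}\sum_{j\ne1}Z^j_t$ using exchangeability and the law of large numbers, and Gr\"onwall. Your boundary case check and explicit constant $C=\sqrt2\,k$ only make precise what the paper leaves implicit, and the paper's $C$ also depends on $k$; your aside about $u\le 2k$ does not remove that $k$-dependence, but it is not needed.
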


\begin{proof}
The proof proceeds along similar lines to the argument given for Theorem 2.1 in the recent work \cite{cao_fractal_2025} (see also \cite{cao_fractal_2026}). However, for the sake of completeness we provide the full details here. For an arbitrary but fixed $i \in \{1,\ldots,N\}$, let $h(t) = \mathbb{E}[|X_t^{i,N} - Z_t^i|]$. Due to the aforementioned coupling, we automatically have
\[\mathcal{W}_1\left(\mathcal{L}({\bf X}_t^N) \, ,\, {\bf p}^{\otimes N}(t) \right) \leq \mathbb{E}\left[\frac{1}{N} \sum_{j=1}^N |X_t^{j,N} - Z_t^j| \right] = \mathbb{E}[|X_t^{i,N} - Z_t^i|].\] Consequently, it suffices to control $h(t)$ from above.

From the SDEs \eqref{eq:equivalent_SDE_Xi} and \eqref{eq:mean_field_SDE_Zi}, we observe that for every atom \((t, u)\) of \(\cP^i\) such that
\[u \in \left( \frac{k + \min\{ \cA_{t-}^{i,N},\, m_t \}}{2k},\, \frac{k + \max\{ \cA_{t-}^{i,N},\, m_t \}}{2k} \right),\]
the quantity \(|X_t^i - Z_t^i|\) increases (after a single jump) by at most \(2\). In all other scenarios, the quantity \(|X_t^i - Z_t^i|\) either decreases or remains unchanged. Consequently, we deduce that

\begin{equation*}
\begin{aligned}
\frac{\dd}{\dd t} h(t)
&\leq 2\,\EE \int_0^1 \mathbbm{1}\left\{\tfrac{k+\min\{\cA_{t-}^{i,N}, m_t\}}{2k} \leq u \leq \tfrac{k+\max\{\cA_{t-}^{i,N}, m_t\}}{2k} \right\}\, \dd u \\
&= \frac{1}{k}\, \EE \left|\mathcal{A}^{i,N}_t - m_t\right| \\
&\leq \frac{1}{k}\,\left(\EE \left| \mathcal{A}^{i,N}_t - \overbar{\mathcal{A}}^{i,N}_t \right| + \EE \left|\overbar{\mathcal{A}}^{i,N}_t - m_t \right| \right),
\end{aligned}
\end{equation*}
in which $\overbar{\mathcal{A}}^{i,N}_t \coloneqq \frac{1}{N-1} \sum\limits_{j\neq i} Z^j_t$. Thanks to the exchangeability of the agent-based dynamics, we have
\[\mathbb{E}\left|\mathcal{A}^{i,N}_t - \overbar{\mathcal{A}}^{i,N}_t\right| \leq h(t).\]
On the other hand, since $|Z_t^i| \leq k$ almost surely, the classical law of large numbers guarantees the existence of some finite constant $C > 0$ (independent of $t$) such that
\begin{equation*}
\EE \left|\overbar{\mathcal{A}}^{i,N}_t - m_t\right| \leq \frac{C}{\sqrt{N}}.
\end{equation*}
Putting these estimates together, we arrive at the following differential inequality:
\begin{equation*}
\frac{\dd}{\dd t} h(t) \leq \frac{1}{k}\,h(t) + \frac{C}{k\,\sqrt{N}}.
\end{equation*}
Finally, $h(0) = 0$ since initially we have ${\bf Z}_0 = {\bf X}^N_0$, from which the advertised bound \eqref{eq:PoC} follows thanks to Grownall's inequality.
\end{proof}


\subsection{Convergence to consensus for the finite system}\label{subsec:sec2.2}

We conclude this section by presenting results concerning the asymptotic behavior of the finite system, where the number of agents $N$ is fixed. For notational convenience, throughout this section we omit the superscript $N$ and write ${\bf X}_t$ in place of ${\bf X}^N_t$. First of all, we notice from the definition of agent-based model \eqref{eq:dynamics} that the two consensus/extreme states ${\bf X}^- \coloneqq (-k,\ldots,-k)$ and ${\bf X}^+ \coloneqq (k,\ldots,k)$ are absorbing. Since the Markov chain $\left({\bf X}_t\right)_{t\geq 0}$ is finite, its large time behavior must be absorption in a closed communicating class (consisting of an absorbing state or set of states) \cite{durrett_probability_2019,lanchier_stochastic_2017,liggett_interacting_1985}. Our main goal is to show that starting from any initial configuration ${\bf X}_0$ the chain is eventually absorbed in one of the two consensus states with probability $1$. For this purpose, we let $(\tau_\ell)_{\ell \geq 1}$ be a collection of i.i.d. $\textrm{Exp}(N/2)$ inter-arrival times of a Poisson clock running at rate $N/2$ and denote $T_n \coloneqq \sum_{\ell=1}^n \tau_\ell$ as the $n$-th ring time for each $n\geq 1$. Define the embedded discrete-time Markov chain by setting ${\bf Y}_0 \coloneqq {\bf X}_0$ and ${\bf Y}_n \coloneqq {\bf X}_{T_n}$ for $n\geq 1$, we can recover the continuous-time Markov chain using ${\bf X}_t = {\bf Y}_{N_t}$, where $N_t \coloneqq \max\{n \in \mathbb N \mid T_n \leq t\}$ represents the (random) number of rings up to and including time $t$.

\begin{proposition}\label{prop:conv_finite_chain}
For any fixed $N \geq 2$ and any initial configuration ${\bf X}_0$, the Markov chain $\left({\bf X}_t\right)_{t\geq 0}$ is absorbed in finite time almost surely in one of the two consensus states ${\bf X}^-$ or ${\bf X}^+$. In other words, we have
\begin{equation*}
\mathbb{P}\left({\bf X}_t \xrightarrow{t \to \infty} {\bf X}^- ~\textrm{or}~ {\bf X}_t \xrightarrow{t \to \infty} {\bf X}^+ \right) = 1.
\end{equation*}
\end{proposition}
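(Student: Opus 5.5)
We argue with the embedded discrete-time chain $({\bf Y}_n)_{n\geq 0}$, which lives on the finite state space $\Omega^N$. Absorption of ${\bf X}_t$ is equivalent to absorption of ${\bf Y}_n$, because $N_t \to \infty$ almost surely. The standard finite Markov chain criterion is the following: if from every state there is a path of positive probability, of length at most $L$, into the absorbing set $\{{\bf X}^-,{\bf X}^+\}$, then there is a uniform $\delta>0$ such that
\[
\mathbb{P}\bigl(\text{not absorbed within } L \text{ steps} \mid {\bf Y}_0={\bf x}\bigr)\leq 1-\delta \qquad\text{for every } {\bf x}.
\]
By the Markov property, the probability of not being absorbed by step $jL$ is then at most $(1-\delta)^j$, which tends to $0$. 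So the whole task reduces to exhibiting, for each non-absorbing ${\bf x}$, an explicit finite sequence of positive-probability transitions ending at ${\bf X}^+$ or ${\bf X}^-$.

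To build this path, we first note a useful feature of the update rule \eqref{eq:dynamics}. An agent $j$ with $X^j=k$ makes the listener move up with probability $1$, but the "up" probability $\tfrac12+\tfrac{X^j}{2k}$ is positive whenever $X^j>-k$. So any persuader that is not at $-k$ can push a listener up with positive probability. The same holds symmetrically: any persuader not at $k$ can push a listener down. In addition, each ordered pair $(i,j)$ is selected with probability $1/(N(N-1))>0$.

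Now take ${\bf x}\neq {\bf X}^\pm$. At least one agent has a value other than $-k$, or else ${\bf x}={\bf X}^-$; call it $j$. We first raise $j$ itself. Pick any other agent $i\neq j$ as persuader: if $x^i>-k$, it can push $j$ up. Otherwise $x^i=-k$, and we first use $j$ as persuader to push $i$ up to $-k+1$, and then use $i$ to push $j$ up. Repeating this, we drive $j$ to $k$ in at most $O(k)$ steps. Once $j$ sits at $k$, we use $j$ as persuader to push every other agent up, one unit at a time, until all reach $k$. Each such step has probability $\frac{1}{N(N-1)}\cdot 1$, and $j$ is never a listener in these steps, so it stays at $k$. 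The total length is at most $L=2kN+2k$, which is uniform in ${\bf x}$. Every step has probability at least
\[
\frac{1}{N(N-1)}\cdot\min_{x>-k}\Bigl(\tfrac12+\tfrac{x}{2k}\Bigr) = \frac{1}{2kN(N-1)},
\]
so we may take $\delta=\bigl(2kN(N-1)\bigr)^{-L}$.

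The main obstacle is to make sure the path never gets stuck because of the boundary cases. Concretely, a persuader at $-k$ gives zero up-probability, and an agent at $k$ cannot move up. The case distinction above handles this: the initial raising phase is needed exactly when every other agent sits at $-k$. Once the path is constructed, the geometric tail bound gives almost sure absorption in finite time. Since ${\bf X}^\pm$ are absorbing, ${\bf X}_t$ then converges to one of them.
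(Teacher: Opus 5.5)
Your proof is correct and follows essentially the same route as the paper: pass to the embedded chain, exhibit an explicit path of length $O(kN)$ into ${\bf X}^+$ in which each step has probability at least $\frac{1}{2kN(N-1)}$, and conclude with a geometric tail bound. The only difference is the order of the phases: the paper first uses the seed agent (opinion $>-k$, never a listener in this phase) as persuader to lift every other agent to $k$, and then lifts the seed with a persuader already at $k$. That ordering makes your preliminary bootstrapping case, where all other agents sit at $-k$, unnecessary.
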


\begin{proof}
We first prove the almost surely absorption of the embedded chain ${\bf Y}_n$. The crucial result on which we will rely is that for any non-consensus configuration ${\bf y} \in \Omega^N$, we can construct a finite sequence of ordered-pair selections and associated outcomes (all with positive conditional probabilities) which transform ${\bf y}$ into ${\bf X}^+$ in at most $L({\bf y}) \coloneqq \sum_{i=1}^N (k-y^i) \leq 2kN$ discrete steps. A similar statement holds with ${\bf X}^+$ replaced by ${\bf X}^-$.

Indeed, as ${\bf y} \notin \{{\bf X}^+,{\bf X}^-\}$, there exists at least one index $i_0$ with $y^{i_0} > -k$. Fix such an $i_0$ and we call agent $i_0$ the initial seed persuader. Our goal is to produce a finite list of ordered pairs $(i_n, j_n)_{1\leq n\leq L({\bf y})}$ and required outcomes such that applying those interactions in a sequel turns the opinion of every agent to $k$. The construction proceeds in two phases as follows:
\begin{itemize}
\item In the first phase, for every agent $i \neq i_0$, we perform the ordered pair interaction where agent $i$ is persuaded by $i_0$ repeatedly until agent $i$ reaches opinion $k$. Since the persuader is agent $i_0$ (whose opinion $y^{i_0}$ is larger than $-k$) and in this phase we never choose an ordered pair where $i_0$ is the listener agent, the opinion of agent $i_0$ remains unchanged during this phase and each desired increase outcome for a listener agent has conditional probability \[\frac 12 + \frac{y^{i_0}}{2k} \geq \frac 12 + \frac{-k+1}{2k} = \frac{1}{2k}.\] Thus for each single increase event while using agent $i_0$ as a persuader, there is a probability at least $p_{\min} = \frac{1}{2k} > 0$ of success conditional on that ordered pair being selected at that ring. Notice that the probability of selecting the particular ordered pair of agents $(i,i_0)$ at a ring time is $\frac{1}{N(N-1)} > 0$, thus in the first phase we can raise the opinion of every agent $i\neq i_0$ to opinion $k$ by performing $\sum_{i\neq i_0} \left(k - y^i\right)$ increase events in total (for each $i \neq i_0$ we issue that many $(i,i_0)$ persuade requests and require all those attempts to succeed). This is a finite prescribed sequence which has a positive probability bounded below by $\left(\frac{p_{\min}}{N(N-1)}\right)^{\sum_{i\neq i_0} \left(k - y^i\right)}$. After the completion of the first phase, the opinion of every agent except possibly agent $i_0$ will be at $k$, and agent $i_0$ still remains its original opinion $y^{i_0}$ which may be less than $k$.

\item In the second phase, we employ newly-created agents with opinion $k$ to raise the opinion of agent $i_0$. Since phase 1 has created at least one agent (in fact many) with opinion $k$, we choose any agent $j$ with opinion $k$ and repeatedly use the ordered pair $(i_0,j)$ requiring increase outcomes until agent $i_0$ reaches opinion $k$. As the persuader $j$ currently has opinion $k$, each required increase outcome occurs with probability $\frac 12 + \frac{k}{2k} = 1$, i.e., it will happen with probability 1 conditioned on selecting the ordered pair of agents $(i_0,j)$. Notice again that the probability to pick the desired ordered pair of agents $(i_0,j)$ at a ring time equals $\frac{1}{N(N-1)}$, thus the finite list of required interaction pairs $(i_0,j), (i_0,j), \ldots, (i_0,j)$ needed to raise the opinion of agent $i_0$ to $k$ occurs with probability at least $\left(\frac{1}{N(N-1)}\right)^{k-y^{i_0}} > 0$.
\end{itemize}
Concatenating the aforementioned two phases yields a finite explicit sequence of ordered pairs and desired increase outcomes with probability bounded below by \[\pi({\bf y}) \coloneqq \left(\frac{p_{\min}}{N(N-1)}\right)^{\sum_{i\neq i_0} \left(k - y^i\right)}\cdot \left(\frac{1}{N(N-1)}\right)^{k-y^{i_0}} > 0,\]
which transforms the initial opinion configuration ${\bf y}$ into ${\bf X}^+$. Similarly, for any ${\bf y} \notin \{{\bf X}^+,{\bf X}^-\}$ there is a positive probability to reach ${\bf X}^-$ within a finite step. Consequently, the only closed communicating classes are the singleton sets $\{{\bf X}^+\}$ and $\{{\bf X}^-\}$, and every other non-consensus opinion configuration is transient or leads with positive probability to one of those two absorbing states. From standard theory on finite Markov chains, this implies that the chain $({\bf Y}_n)_{n\geq 0}$ must be absorbed (i.e., enter one of the absorbing singletons) with probability 1. To be more precise, define the absorbtion time by \[\tau \coloneqq \inf\left\{n \in \mathbb N \mid {\bf Y}_n \in \{{\bf X}^+,{\bf X}^-\}\right\}.\] Since at each block of at most $2kN$ consecutive steps there is at least probability $p_0 \coloneqq \min_{{\bf y} \in \Omega^N} \pi(y) \geq (N\,(N-1))^{-2kN}\,(2\,k)^{-2kN} > 0$, that absorption occurs within that block, the absorption time $\tau$ is stochastically dominated by a geometric random variable with success probability $p_0$ (depending only on $N$ and $k$). Therefore, we deduce that $\mathbb{P}(\tau < \infty) = 1$ and $\mathbb{E}[\tau] \leq \frac{1}{p_0} < \infty$, whence $\tau$ is finite almost surely and the embedded discrete chain $({\bf Y}_n)_{n\geq 0}$ will be absorbed in finite discrete steps almost surely.

Since $\tau$ is a finite integer almost surely, the corresponding absorption time for the continuous-time Markov chain $({\bf X}_t)_{t\geq 0}$ defined by $S \coloneqq T_\tau \coloneqq \sum_{\ell=1}^\tau \tau_\ell$ is a sum of almost surely finite number of i.i.d. exponential random variables, hence $S$ is finite almost surely as well. Finally, for $t\geq S$ we have $N_t \geq \tau$, whence \[{\bf X}_t = {\bf Y}_{N_t} = {\bf Y}_\tau \in \{{\bf X}^+,{\bf X}^-\} \] for all $t\geq S$. This completes the proof of Proposition \ref{prop:conv_finite_chain}.
\end{proof}

\section{Large time analysis of the mean-field system}\label{sec:sec3}
\setcounter{equation}{0}

In this section, we undertake the analysis of the long-time behavior of solutions to the mean-field system of nonlinear ODEs \eqref{eqn:ODE_BA}. For the sake of convenience, we perform a harmless change of variable by setting
\begin{equation}\label{eq:relabling}
q_n \coloneqq p_{-k+n} \quad \text{for all $0 \leq n \leq 2k$}.
\end{equation}
Accordingly, we identify ${\bf p} \coloneqq (p_{-k},p_{-k+1},\ldots,p_{k-1},p_k)$ with ${\bf q} \coloneqq (q_0,q_1,\ldots,q_{2k-1},q_{2k})$. In essence, this amounts to shifting the opinion space $\Omega$ from $\{-k,\ldots,k\}$ to $\{0,\ldots,2k\}$ so that all admissible opinion values lie in $\mathbb N$. After this straightforward relabeling of the solution vector and corresponding shift of the opinion space, the ODE system \eqref{eqn:ODE_BA} takes the following equivalent form:

\begin{equation}\label{eqn:ODE_main}
\left\{
\begin{aligned}
q'_0(t) & = \frac{2k-M(t)}{2k}\,q_1(t) - \frac{M(t)}{2k}\,q_0(t), \\
q'_n(t)& = \frac{2k-M(t)}{2k}\,q_{n+1}(t) + \frac{M(t)}{2k}\,q_{n-1}(t) - q_n(t), ~~0<n<2k, \\
q'_{2k}(t) & = \frac{M(t)}{2k}\,q_{2k-1}(t) - \frac{2k-M(t)}{2k}\,q_{2k}(t),
\end{aligned}\right.
\end{equation}
in which $t \geq 0$ and $M(t) \coloneqq \sum_{0\leq n\leq 2k} n\,q_n(t)$. Next, we collect a few elementary observations on the solution of \eqref{eqn:ODE_main} in the following lemma.
\begin{lemma}\label{lem:1}
Assume that ${\bf q} = (q_0,\ldots,q_{2k})$ is a classical solution to the nonlinear system of ODEs \eqref{eqn:ODE_main} with ${\bf q}(t=0) \in \mathcal{P}(\{0,\ldots,2k\})$. Then we have ${\bf q}(t) \in \mathcal{P}(\{0,\ldots,2k\})$ for all $t \geq 0$. Moreover, the system \eqref{eqn:ODE_main} admits three equilibrium solutions given by ${\bf q}^l = (1,0,\ldots,0)$, ${\bf q}^u = \frac{1}{2k+1}(1,1,\ldots,1)$, and ${\bf q}^r = (0,\ldots,0,1)$, respectively. Equivalently, suppose that ${\bf p} = (p_{-k},\ldots,p_k)$ is a classical solution to the nonlinear system of ODEs \eqref{eqn:ODE_BA} with ${\bf p}(t=0) \in \mathcal{P}(\{-k,\ldots,k\})$. Then ${\bf p}(t) \in \mathcal{P}(\{-k,\ldots,k\})$ for all $t \geq 0$, and the system \eqref{eqn:ODE_BA} admits three equilibrium solutions given by ${\bf p}^l = (1,0,\ldots,0)$, ${\bf p}^u = \frac{1}{2k+1}(1,1,\ldots,1)$, and ${\bf p}^r = (0,\ldots,0,1)$, respectively.
\end{lemma}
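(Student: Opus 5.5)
We will argue in the shifted variables \eqref{eqn:ODE_main} throughout. The statement for \eqref{eqn:ODE_BA} then follows from the relabeling \eqref{eq:relabling}, since $M(t)=m(t)+k$.

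\textbf{Conservation of mass.} Summing the equations in \eqref{eqn:ODE_main} over $n$, the right-hand side telescopes. Each $q_n$ with $0<n<2k$ receives total coefficient
\[
\frac{2k-M}{2k}+\frac{M}{2k}-1=0
\]
from the three equations in which it appears. The boundary terms cancel as well. For example, $q_0$ appears with coefficient $-\frac{M}{2k}$ in its own equation and $+\frac{M}{2k}$ in the equation for $q_1$. Hence $\frac{\dd}{\dd t}\sum_n q_n=0$ and $\sum_n q_n(t)=1$.

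\textbf{Nonnegativity.} We treat $M(t)$ as a given continuous function along the solution. Then \eqref{eqn:ODE_main} becomes a linear system ${\bf q}'=A(t){\bf q}$ whose matrix has off-diagonal entries $\frac{M}{2k}$ and $\frac{2k-M}{2k}$. These are nonnegative as long as $0\le M(t)\le 2k$.

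A continuity argument resolves the apparent circularity. Let $t^*$ be the supremum of times up to which ${\bf q}\ge 0$ and $0\le M\le 2k$. On $[0,t^*]$ the system is cooperative, i.e. a Metzler matrix. Writing $q_n(t)=\expo^{-\int_0^t c_n}\bigl(q_n(0)+\int_0^t \expo^{\int_0^s c_n}(\text{nonnegative inflow})\,\dd s\bigr)$ shows the solution stays nonnegative.

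To push past $t^*$ without a sign issue, I would make the off-diagonal coefficients nonnegative for all values of $M$. One option is to replace them by $\frac{\max(0,\min(M,2k))}{2k}$ and $\frac{\max(0,\min(2k-M,2k))}{2k}$, solve the truncated system, and then note a posteriori that the solution is a probability vector. For such a vector $M\in[0,2k]$ automatically, so the truncation is inactive and, by uniqueness (the right-hand side is locally Lipschitz), the two solutions coincide. Combined with conservation of mass, this gives ${\bf q}(t)\in\mathcal{P}(\{0,\ldots,2k\})$.

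\textbf{Equilibria.} This is a direct substitution check.
\begin{itemize}
\item For ${\bf q}^l$ we have $M=0$. The right-hand sides are $q_1-0=0$ for $n=0$ and $q_{n+1}-q_n=0$ for $n>0$, since all these entries vanish.
\item For ${\bf q}^r$, by symmetry $M=2k$, and the check is the same.
\item For ${\bf q}^u$, $M=\frac{1}{2k+1}\sum_{n=0}^{2k} n=k$, so both coefficients equal $\tfrac12$. Every interior right-hand side is $\tfrac12 c+\tfrac12 c-c=0$ with $c=\frac{1}{2k+1}$. The boundary ones are $\tfrac12 c-\tfrac12 c=0$.
\end{itemize}
Translating back gives ${\bf p}^l,{\bf p}^u,{\bf p}^r$.

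\textbf{Main obstacle.} The only delicate point is the nonnegativity step, because of the nonlinear coupling through $M$. The truncation-plus-uniqueness device handles it cleanly. Everything else is routine algebra.
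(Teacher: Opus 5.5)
\textbf{Gap: you verify the three equilibria but do not show they are the only ones.} Your equilibrium step is a substitution check, so it proves only that ${\bf q}^l,{\bf q}^u,{\bf q}^r$ are stationary. The lemma, as the paper proves it and later uses it, is a classification: these are the \emph{only} equilibria in $\mathcal{P}(\{0,\ldots,2k\})$. This is relied on in Proposition~\ref{prop:conv_k=1}, where a limiting equilibrium with $\inf_t M(t)>1$ is identified as ${\bf q}^r$. That identification requires knowing no other equilibrium exists.

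The missing argument runs as follows.
\begin{itemize}
\item Summing the stationary equations for $q_0,\ldots,q_n$ gives the zero-flux relation $(2k-M)\,q_{n+1}=M\,q_n$ for $0\le n\le 2k-1$.
\item If $M=0$ or $M=2k$, the probability vector must be a point mass, i.e.\ ${\bf q}^l$ or ${\bf q}^r$.
\item If $0<M<2k$, all $q_n>0$ and ${\bf q}$ is geometric, $q_n\propto r^n$ with $r=M/(2k-M)$, i.e.\ $M=2kr/(1+r)$.
\item The self-consistency condition $M=\sum_n n\,q_n$ then reduces (for $r\neq1$) to $f(r)=(2k-1)r^{2k+1}-(2k+1)r^{2k}+(2k+1)r-(2k-1)=0$.
\item One has $f'(r)=(2k+1)\bigl[(2k-1)r^{2k}-2k\,r^{2k-1}+1\bigr]$. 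The bracket has derivative $2k(2k-1)r^{2k-2}(r-1)$ and vanishes at $r=1$, so it is nonnegative and zero only there.
\item Hence $f$ is strictly increasing with unique positive root $r=1$. This forces $M=k$ and ${\bf q}={\bf q}^u$.
\end{itemize}
Without this step you have existence of the three equilibria, not the count asserted in the statement.

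\textbf{Mass conservation and nonnegativity are correct.} Your truncation-plus-uniqueness argument is in fact more careful than the paper, which only shows $\sum_n q_n'=0$ and never addresses nonnegativity. Two small points should be made explicit:
\begin{itemize}
\item The truncated coefficients still sum to $1$ for every $M$, so mass is also conserved for the truncated system.
\item Positivity for the time-dependent Metzler system is cleanest via Picard iteration for $w=\expo^{t}{\bf q}$, which solves $w'=(A(t)+I)w$ with $A(t)+I$ entrywise nonnegative. Your componentwise variation-of-constants formula is circular as written, since the inflow terms involve the unknowns themselves.
\end{itemize}
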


\begin{proof}
For the sake of definiteness we will work with the original system \eqref{eqn:ODE_BA}. The fact that ${\bf p}(t=0) \in \mathcal{P}(\{-k,\ldots,k\})$ implies ${\bf p}(t) \in \mathcal{P}(\{-k,\ldots,k\})$ for all $t \geq 0$ is a simple consequence of the following conservation of probability mass:
\begin{align*}
\sum\limits_{n=-k}^k p'_n &= \frac{k-m}{2k}\,\sum\limits_{n=-k}^{k-1} p_{n+1} + \frac{k+m}{2k}\,\sum\limits_{n=-k+1}^k p_{n-1} \\
&\qquad - \sum\limits_{n=-k+1}^{k-1} p_n - \frac{k+m}{2k}\,p_{-k} - \frac{k-m}{2k}\,p_k \\
&= 0.
\end{align*}
At equilibrium where $p'_n = 0$ for all $-k\leq n\leq k$, we require that
\begin{equation}\label{eq:cond}
(k-m)\,p_{n+1} = (k+m)\,p_n ~~\textrm{for each $-k\leq n\leq k-1$}.
\end{equation}
A direct computation shows that the consensus opinion profile concentrated at either $-k$ (given by ${\bf p}^l$) or $k$ (given by ${\bf p}^r$) satisfies the condition \eqref{eq:cond}. Excluding these equilibrium distributions, we may assume without loss of generality that the average opinion $m = \sum_{-k\leq n\leq k} n\,p_n < k$ lies in the open interval $(-k,k)$. Since $-k < m < k$, we deduce from \eqref{eq:cond} that $\frac{p_{n+1}}{p_n} = \frac{k+m}{k-m} \coloneqq r \in (0,\infty)$ for all $-k\leq n\leq k-1$. In other words, ${\bf p}$ is a geometric distribution starting from $p_{-k}$ with the common ratio $r$. As ${\bf p} \in \mathcal{P}(\{-k,\ldots,k\})$, $p_{-k} = \frac{1-r}{1-r^{2k+1}}$. Employing the identity \[\frac{r-1}{r+1}\,k = m = \sum_{-k\leq n\leq k} n\,p_n = p_{-k}\,r^k\,\sum_{-k\leq n\leq k} n\,r^n,\] we end up with the following algebraic equation of degree $2k+1$ in $r > 0$:
\begin{equation}\label{eq:polynomial}
f(r) \coloneqq (2k-1)\,r^{2k+1} - (2k+1)\,r^{2k} + (2k+1)\,r - (2k-1) = 0.
\end{equation}
An elementary investigation on the derivative of $f$ shows that $f(r)$ is strictly increasing for all $r > 0$ and $f(1) = 0$, so that \eqref{eq:polynomial} admits a unique positive root $r = 1$, from which we conclude that the uniform distribution ${\bf p}^u$ is likewise an equilibrium of the ODE system \eqref{eqn:ODE_BA}.
\end{proof}

\subsection{Quantitative convergence for three opinion states}\label{subsec:sec3.1}

Now we turn to the analysis of the large time behavior of the (equivalent) mean-field incremental voter model \eqref{eqn:ODE_main}. We first consider the special case $k=1$, in which the following convergence guarantee applies.

\begin{proposition}\label{prop:conv_k=1}
Assume that ${\bf q} = (q_0,q_1,q_2)$ is a classical solution to the nonlinear system of ODEs \eqref{eqn:ODE_main} with $k=1$ and ${\bf q}(t=0) \in \mathcal{P}(\{0,1,2\})$. Then we have the following three distinct scenarios (depending on the size of the initial average opinion):
\begin{enumerate}[label=(\roman*)]
\item If $M(0) > 1$, then ${\bf q}(t) \xrightarrow{t \to \infty} {\bf q}^r$ component-wisely.
\item If $M(0) < 1$, then ${\bf q}(t) \xrightarrow{t \to \infty} {\bf q}^l$ component-wisely.
\item If $M(0) = 1$, then ${\bf q}(t) \xrightarrow{t \to \infty} {\bf q}^u$ and the convergence of each component is exponentially fast.
\end{enumerate}
\end{proposition}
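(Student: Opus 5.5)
For $k=1$ the system \eqref{eqn:ODE_main} reads
\[
q_0' = \tfrac{2-M}{2}q_1 - \tfrac{M}{2}q_0,\qquad q_1' = \tfrac{2-M}{2}q_2+\tfrac{M}{2}q_0-q_1,\qquad q_2'=\tfrac{M}{2}q_1-\tfrac{2-M}{2}q_2,
\]
with $M=q_1+2q_2$. The plan is to obtain a closed scalar ODE for the mean and then read off the components from it.

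\textbf{Step 1: an ODE for $M$.} Compute $M'=q_1'+2q_2'$. Substituting and simplifying, the terms should collapse to
\[
M' = \tfrac{M}{2}(q_0+q_1) - \tfrac{2-M}{2}(q_1+q_2).
\]
Writing $m=M-1=q_2-q_0\in[-1,1]$, this becomes
\[
m' = \tfrac{1}{2}\bigl[(1+m)(1-q_2)-(1-m)(1-q_0)\bigr] = m - \tfrac12\bigl[(q_2-q_0)+m(q_0+q_2)\bigr] = \tfrac{m}{2}(1-q_0-q_2) = \tfrac{m}{2}\,q_1 .
\]
So $m' = \tfrac12 m\,q_1$. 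Consequently the sign of $m$ is preserved, $|m|$ is nondecreasing, and $m\equiv0$ whenever $m(0)=0$.

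\textbf{Step 2: case (iii), $M(0)=1$.} Here $m\equiv0$, so $M\equiv1$ and the system becomes linear:
\[
q_0'=\tfrac12(q_1-q_0),\qquad q_2'=\tfrac12(q_1-q_2),\qquad q_1=1-q_0-q_2 .
\]
The constraint $q_2-q_0=0$ gives $q_0=q_2$, hence $q_0'=\tfrac12(1-3q_0)$. Therefore
\[
q_0(t)-\tfrac13=(q_0(0)-\tfrac13)\,\expo^{-3t/2},
\]
and likewise $q_2$, while $q_1 = 1-2q_0$. This gives exponential convergence to ${\bf q}^u$.

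\textbf{Step 3: case (i), $m(0)>0$; case (ii) follows by the symmetry $q_n\leftrightarrow q_{2-n}$.} By Step 1, $m(t)\geq m(0)>0$ and $m$ is nondecreasing and bounded by $1$, so $m(t)\to m_\infty\in(0,1]$. Integrating $m'=\tfrac12 m q_1$ gives $\int_0^\infty m\,q_1\,\dd t<\infty$, hence $\int_0^\infty q_1\,\dd t<\infty$. Since $q_1'$ is bounded, $q_1$ is uniformly continuous, and Barbalat's lemma yields $q_1(t)\to0$.

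It then remains to show $q_0\to0$. Because $q_0+q_2=1-q_1\to1$ and $q_2-q_0=m\to m_\infty$, we get $q_0\to(1-m_\infty)/2$. If $m_\infty<1$ this limit is positive, which should yield a contradiction. From the $q_1$ equation,
\[
q_1' = \tfrac{1-m}{2}q_2+\tfrac{1+m}{2}q_0-q_1 \;\longrightarrow\; \tfrac{1-m_\infty}{2}\cdot\tfrac{1+m_\infty}{2}+\tfrac{1+m_\infty}{2}\cdot\tfrac{1-m_\infty}{2} = \tfrac{1-m_\infty^2}{2} > 0 .
\]
A derivative tending to a positive constant contradicts $q_1\to0$. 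Hence $m_\infty=1$, so $q_2\to1$ and $q_0,q_1\to0$, i.e.\ ${\bf q}(t)\to{\bf q}^r$.

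\textbf{Main obstacle.} The delicate step is excluding a limit with $m_\infty<1$ in Step 3. The argument above handles it through the limiting value of $q_1'$, once the Barbalat-type deduction $q_1\to0$ is in hand. The rest is algebra: verifying the identity $m'=\tfrac12 m q_1$ is routine but essential.
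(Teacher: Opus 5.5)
Your proof is correct and follows the paper's argument. It uses the identity $M' = \tfrac12 (M-1)\,q_1$, monotonicity and convergence of $M$, then $q_1 \to 0$ and identification of the limit, with case (iii) reduced to the explicit linear system and case (ii) obtained by the flip $q_n \leftrightarrow q_{2-n}$. Your small variations are more explicit versions of the paper's steps: you obtain $q_1 \to 0$ from $\int_0^\infty q_1\,\dd t < \infty$ via Barbalat, and you rule out $m_\infty < 1$ through the limit of $q_1'$ rather than by saying the limit must be one of the equilibria listed in Lemma~\ref{lem:1}.
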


\begin{proof}
We observe that the mean opinion $M = q_1 + 2\,q_2$ evolves according to
\begin{equation}\label{eq:M_derivative}
M' = q'_1 + 2\,q'_2 = (1-M/2)\,q_2 + (M/2)\,q_0 - q_1 + M\,q_1 - 2\,q_2 + M\,q_2 = \frac{(M-1)\,q_1}{2}.
\end{equation}
Thus if we assume that $M(0) > 1$, then $M$ is non-decreasing and $M(t) > 1$ for all $t\geq 0$. Since $M$ is evidently bounded above by $2$, we deduce the existence of $M_\infty \coloneqq \lim_{t\to \infty} M(t)$. As $M$ is uniformly bounded with a uniformly bounded derivative, the aforementioned convergence of $M$ as $t \to \infty$ also guarantees that $\lim_{t \to \infty} M'(t) = 0$. This observation together with the strict inequality $M(t) > 1$ for all $t\geq 0$ yields the convergence $q_1(t) \xrightarrow{t \to \infty} 0$, which in turn implies the convergence of $q_2(t)$ in the large time limit (since both limits $\lim_{t\to \infty} M(t)$ and $\lim_{t\to \infty} q_1(t)$ exist). Due to the conservation of the total probability mass, we deduce the large time convergence of $q_0(t)$ as well, whence the element-wise limit $\lim_{t \to \infty} {\bf q}(t)$ of the solution vector ${\bf q}(t)$ exists. The uniform boundedness of the derivatives of $q_n$ for each $n \in \{0,1,2\}$ then ensures that ${\bf q}(t)$ converges to an equilibrium solution of the ODE system \eqref{eqn:ODE_main} with $k=1$. In view of the first-moment condition $\inf_{t\geq 0} M(t) > 1$, we then conclude that $\lim_{t \to \infty} {\bf q}(t) = {\bf q}^r$. This completes the proof of (i). By analogous reasoning, the second statement (ii) follows immediately. It remains to establish the last statement, where we assume that $M(0) =1$. In this scenario we have $M(t) = 1$ (or equivalently $q_2(t) = q_0(t)$) for all $t\geq 0$, so that the nonlinear system of ODEs \eqref{eqn:ODE_main} (with $k=1$) boils down to the following linear system:
\begin{equation}\label{eq:linear_ODE}
\frac{\dd}{\dd t}\begin{pmatrix} q_0 \\ q_1 \\ q_2 \end{pmatrix} = \begin{pmatrix} -1/2 & 1/2 & 0 \\ 1/2 & -1 & 1/2 \\ 0 & 1/2 & -1/2\end{pmatrix}\begin{pmatrix} q_0 \\ q_1 \\ q_2 \end{pmatrix}.
\end{equation}
The linear ODE system \eqref{eq:linear_ODE} can be solved explicitly, leading us to \[{\bf q}(t) = \left(q_0(t), q_1(t), q_2(t)\right) = {\bf q}^u + \left(q_0(0)- 1/3\right)\expo^{-\tfrac{3}{2}\,t}\,(1, -2, 1).\] As a result, the proof of (iii) is also completed.
\end{proof}

We present a series of numerical experiments illustrating the large-time behavior of the opinion profile $\mathbf{q}(t)$ and the evolution of the associated average opinion $M(t)$. Throughout, we take $k = 1$ and solve the ODE system \eqref{eqn:ODE_main} using the standard fourth-order Runge--Kutta method with time step $\Delta t = 0.01$. In Figure~\ref{fig:numerics}-left and Figure~\ref{fig:numerics}-right, we plot, respectively, the evolution of the opinion profile $\mathbf{q}(t)$ and the corresponding trajectory of $M(t)$. The initial data used in the top, middle, and bottom panels are $\mathbf{q}(0) = (0.4, 0.3, 0.3)$, $\mathbf{q}(0) = (0.3, 0.4, 0.3)$, and $\mathbf{q}(0) = (0.3, 0.3, 0.4)$, which yield initial average opinions $M(0) = 0.9 < 1$, $M(0) = 1$, and $M(0) = 1.1 > 1$, respectively. The numerical outcomes align well with the convergence behavior predicted by Proposition \ref{prop:conv_k=1}.

\def\figscale{0.5}
\def\figwidth{0.47\textwidth}
\begin{figure}
	\begin{subfigure}{\figwidth}
		\centering
		\includegraphics[scale=\figscale]{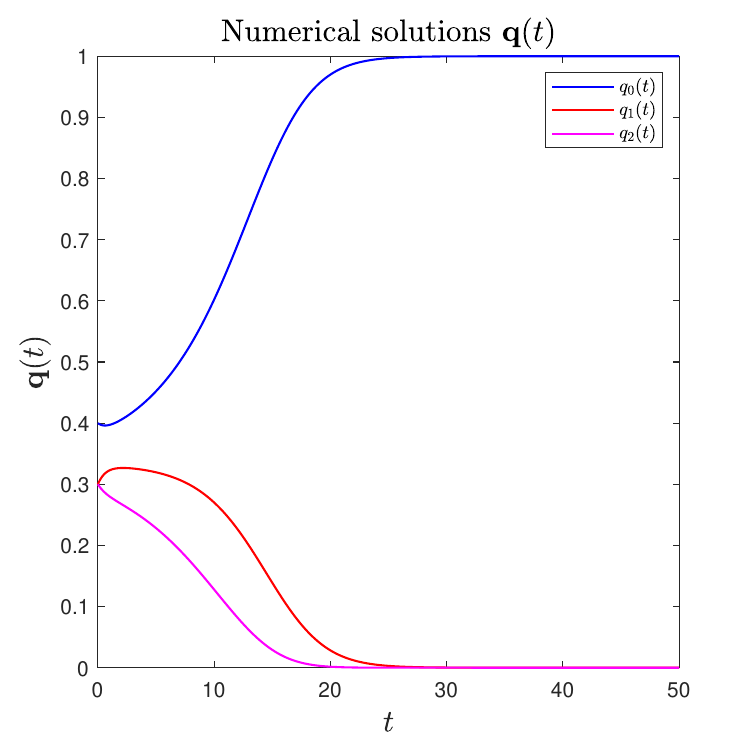}
	\end{subfigure}
	\hspace{0.1in}
	\begin{subfigure}{\figwidth}
		\centering
		\includegraphics[scale=\figscale]{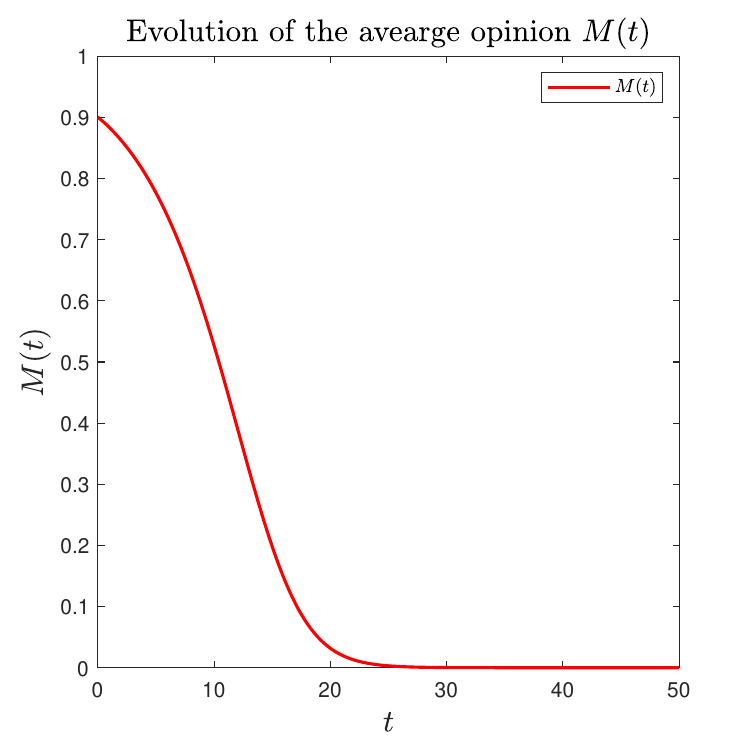}
	\end{subfigure}
	\\
	\begin{subfigure}{\figwidth}
		\centering
		\includegraphics[scale=\figscale]{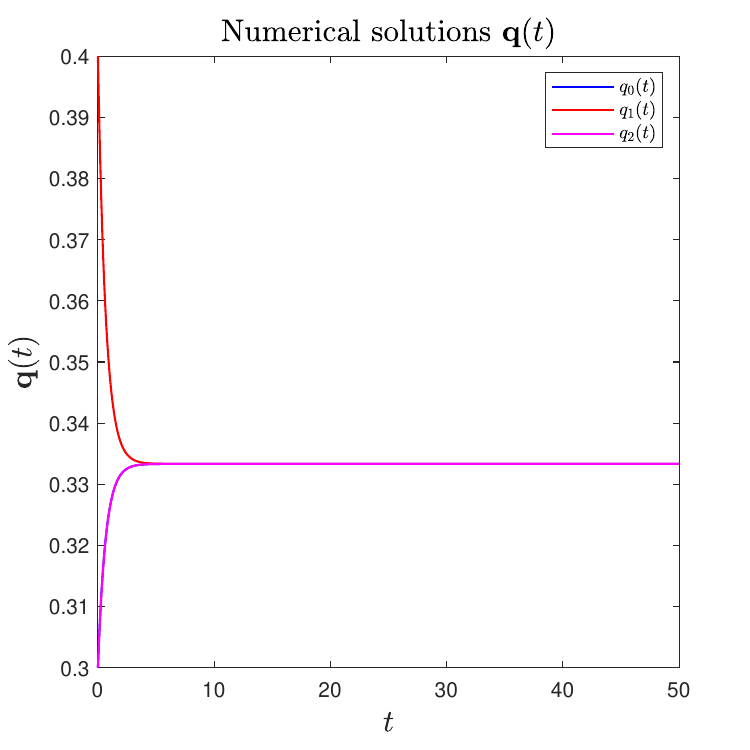}
	\end{subfigure}
	\hspace{0.1in}
	\begin{subfigure}{0.45\textwidth}
		\centering
		\includegraphics[scale=\figscale]{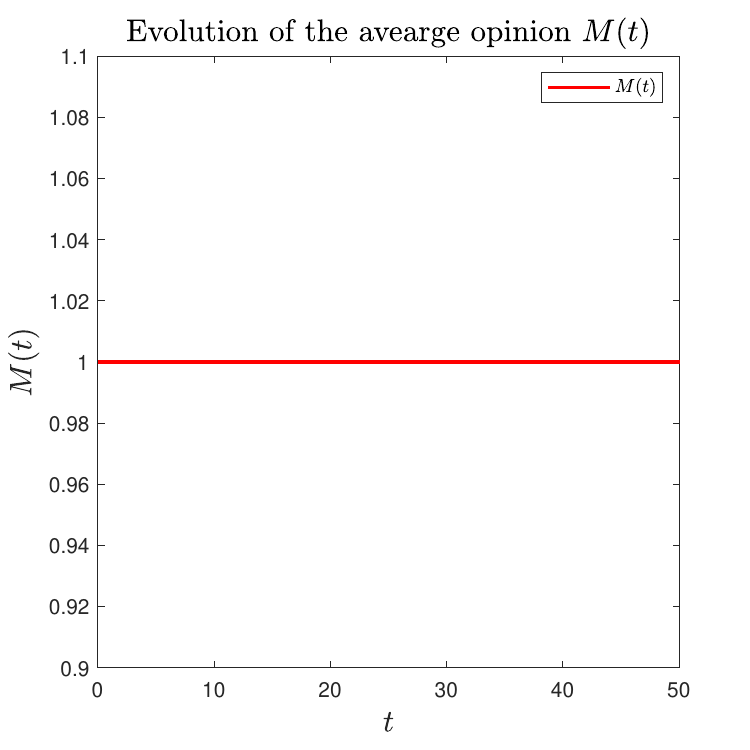}
	\end{subfigure}
    \\
    \begin{subfigure}{\figwidth}
    	\centering
    	\includegraphics[scale=\figscale]{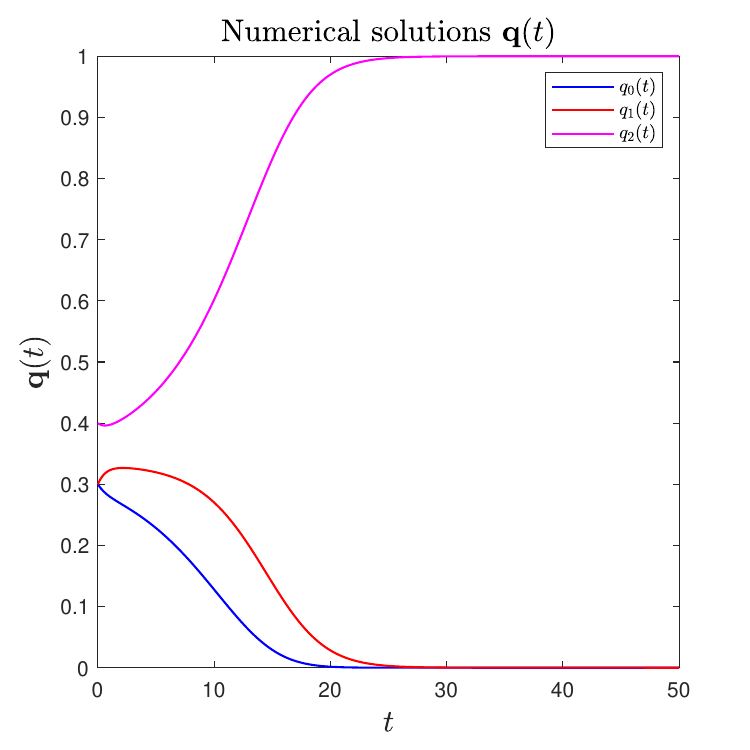}
    \end{subfigure}
    \hspace{0.1in}
    \begin{subfigure}{\figwidth}
    	\centering
    	\includegraphics[scale=\figscale]{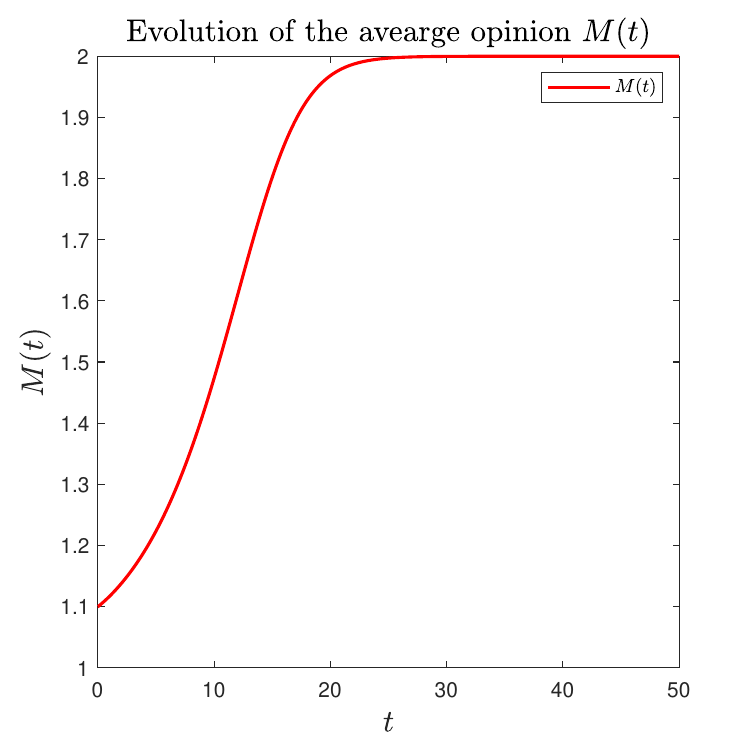}
    \end{subfigure}
	
\caption{{\bf Left}: Simulations of the mean-field dynamical system \eqref{eqn:ODE_main} with $k=1$, for three different initial datum with $M(0) < 1$ (top), $M(0) = 1$ (middle), and $M(0) > 1$ (bottom). {\bf Right}: Evolution of the corresponding average opinion with respect to time.}
\label{fig:numerics}
\end{figure}

Building on Proposition \ref{prop:conv_k=1}, we can strengthen the qualitative convergence results stated in parts (i) and (ii) into quantitative convergence guarantees.

\begin{theorem}\label{thm:conv_k=1}
Assume that ${\bf q} = (q_0,q_1,q_2)$ is a classical solution to the nonlinear system of ODEs \eqref{eqn:ODE_main} with $k=1$ and ${\bf q}(t=0) \in \mathcal{P}(\{0,1,2\})$. Let
\begin{equation*}
\Delta(t) \coloneqq \min\left\{2-M(t), M(t)\right\} = \begin{cases}
2 - M(t), & \textrm{if~ $M(t) > 1$}, \\
M(t),     & \textrm{if~ $M(t) < 1$}.
\end{cases}
\end{equation*}
Then there exists some universal constant $C$ (depending only on ${\bf q}(0)$) such that
\begin{equation}\label{eq:bound_on_Delta}
\Delta(t) \leq C\,\expo^{-\frac{t}{8}} ~~\textrm{for all $t \geq 0$}.
\end{equation}
In particular, for all $t\geq 0$ we have $\max\{q_0(t), q_1(t)\} \leq C\,\expo^{-\frac{t}{8}}$ if $M(0) > 1$, and \\
$\max\{q_1(t), q_2(t)\} \leq C\,\expo^{-\frac{t}{8}}$ if $M(0) < 1$.
\end{theorem}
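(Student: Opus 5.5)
We treat the case $M(0)>1$; the case $M(0)<1$ follows by the reflection $q_n \leftrightarrow q_{2-n}$, which maps $M$ to $2-M$ and preserves the system \eqref{eqn:ODE_main} with $k=1$. The case $M(0)=1$ gives $\Delta \equiv 1$, so it is excluded from \eqref{eq:bound_on_Delta} and is covered by Proposition \ref{prop:conv_k=1}(iii). When $M(0)>1$, the computation \eqref{eq:M_derivative} shows that $M$ is non-decreasing, so $M(t)>1$ for all $t \geq 0$. Hence $\Delta(t) = D(t) \coloneqq 2-M(t) = 2q_0(t)+q_1(t)$. Since $q_0,q_1 \leq D$, the \emph{in particular} statement follows once $D$ is shown to decay exponentially.

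The first step uses the qualitative result. By Proposition \ref{prop:conv_k=1}(i), $M(t)\to 2$. So there is a time $T_0$, depending only on ${\bf q}(0)$, such that $M(t)\geq 3/2$, i.e. $D(t)\leq 1/2$, for all $t\geq T_0$. On $[0,T_0]$ we simply use $D\leq 2$, which gives $D(t) \leq 2\expo^{T_0/8}\expo^{-t/8}$ there. This is exactly where the constant $C$ picks up its dependence on ${\bf q}(0)$.

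The second step, on $[T_0,\infty)$, uses a Lyapunov functional. From \eqref{eq:M_derivative} we get $D' = -\tfrac{(M-1)q_1}{2} \leq -\tfrac{q_1}{4}$. From \eqref{eqn:ODE_main} with $k=1$ we get
\[
q_0' = \tfrac{D}{2}\,q_1 - \tfrac{M}{2}\,q_0 \leq \tfrac{q_1}{4} - \tfrac{3}{4}\,q_0 .
\]
Neither inequality alone closes: $D'$ only controls $q_1$, while $q_0$ is damped but fed by $q_1$. We therefore set $V \coloneqq D + \tfrac12 q_0 = \tfrac52 q_0 + q_1$. Then
\[
V' \leq -\tfrac{q_1}{4} + \tfrac{q_1}{8} - \tfrac{3}{8}q_0 = -\tfrac18\,(q_1+3q_0) \leq -\tfrac18\, V .
\]
Grönwall's inequality then gives $V(t)\leq V(T_0)\,\expo^{-(t-T_0)/8}$, and since $D\leq V$ this yields $D(t)\leq C\expo^{-t/8}$ with $C$ depending on $T_0$. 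This matches the rate $1/8$ in \eqref{eq:bound_on_Delta}.

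The main obstacle is finding this combined functional: one has to choose the weight $a$ in $D + a\,q_0$ so that the positive $q_1$-source in $q_0'$ is absorbed by the decay of $D$, while the leftover damping stays comparable to $V$. The threshold $M\geq 3/2$ is what makes $a=1/2$ work. The argument also relies on $T_0$ being finite, which comes only from the qualitative convergence in Proposition \ref{prop:conv_k=1}. That is why $C$ depends on ${\bf q}(0)$, and it blows up as $M(0)\downarrow 1$.
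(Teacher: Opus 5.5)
Your proof is correct, and you are right to flag that $M(0)=1$ must be excluded, since then $\Delta\equiv 1$. The coefficients check out: for $M\ge 3/2$ one gets $V'\le-\tfrac18(q_1+3q_0)\le-\tfrac18V$, and $q_0,q_1\le D\le V$. Your route does differ from the paper's.

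The paper never uses the qualitative limit $M\to 2$. Relying only on $\Delta\in[0,1)$, it reduces to the closed planar system $\Delta'=-\tfrac{q_1}{2}(1-\Delta)$, $q_1'=\Delta-\tfrac12\Delta^2-\tfrac32 q_1$. It then shows that the ratio $r=q_1/\Delta$ obeys $r'>\tfrac{1-3r}{2}$, so $r\ge\tfrac14$ after some time $T$, and closes the scalar logistic inequality $\Delta'\le-\tfrac18\Delta(1-\Delta)$. Since $r(0)\ge 0$, $T=\tfrac23\log 4$ works for every datum. That route therefore gives the explicit constant $C=4^{1/12}/(M(0)-1)$.

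Your approach trades the nonlinear ratio analysis for a single linear Gr\"onwall estimate on $V=\tfrac52 q_0+q_1$. This avoids dividing by $\Delta$ and leaves no separate case $\Delta(0)=0$. The price is that the entry time $T_0$ into $\{M\ge 3/2\}$ is only qualitative, coming from Proposition~\ref{prop:conv_k=1}(i).

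You can make it explicit with your own idea. Set $\delta=M(0)-1$. The bounds $M-1\ge\delta$, $D\le 1$, $M\ge 1$ give $\bigl(D+\tfrac{\delta}{2}q_0\bigr)'\le-\tfrac{\delta}{4}(q_0+q_1)\le-\tfrac{\delta}{10}\bigl(D+\tfrac{\delta}{2}q_0\bigr)$ from $t=0$ on. Hence $T_0=O(1/\delta)$ and $C=\exp(O(1/\delta))$. This is self-contained, but much worse than the paper's $O(1/\delta)$.
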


\begin{proof}
To start with, we assume that $M(0) > 1$ so that $\Delta(0) = 2 - M(0)$. Thanks to the analysis carried out in Proposition \ref{prop:conv_k=1}, we have $\Delta(t) \in [0,1)$ for all $t\geq 0$. Next, one can check via a short algebraic simplification that \[\Delta' = -M' = -\frac{(M-1)\,q_1}{2} = -\frac{(1-\Delta)\,q_1}{2}.\] Ideally, we would like to derive a useful closed first-order differential inequality in $\Delta$ alone that holds globally, but this is probably not possible since $q_1$ might be arbitrarily small compared to $\Delta$. To overcome the aforementioned issue, notice that $q_2 = (2-\Delta - q_1) / 2$ and $q_0 = (\Delta - q_1) / 2$, thus we arrive at the following two-dimensional autonomous system
\begin{equation}\label{eq:2D}
\begin{cases}
\Delta' &= -\frac{q_1}{2}\,(1-\Delta), \\
q'_1 &=  -\frac 12\,\Delta^2 + \Delta - \frac 32\,q_1.
\end{cases}
\end{equation}
Our objective is to analyze the ratio dynamics of $r \coloneqq \frac{q_1}{\Delta}$ and establish a time-uniform lower bound on $r$ after some finite time, which will in turn leads us to a closed differential inequality in $\Delta$ valid beyond that time. Indeed, since
\[r' = \frac{\Delta\,q'_1 - q_1\,\Delta'}{\Delta^2} = 1 - \frac 32\,r + \frac 12\,r^2 - \frac{\Delta}{2}\,(1+r^2) > \frac{1-3r}{2},\] in which the last inequality follows readily from the fact that $\Delta(t) \in [0,1)$ for all $t\geq 0$, hence $r$ will increase if $r < \frac 13$. In fact, the differential inequality in $r$ guarantees that $r(t) \geq \frac 13 + \left(r(0) - \frac 13\right)\expo^{-\frac{3}{2}t}$, whence there exists some finite $T > 0$ (depending only on $r(0)$) for which $r(t) \geq \frac 14$ whenever $t \geq T$. Assembling the first identity in \eqref{eq:2D} and the lower bound $q_1 \geq \frac 14\,\Delta$ valid for $t\geq T$, we deduce that $\Delta' \leq -\frac 18\,\Delta\,(1-\Delta)$ for all $t \geq T$. Consequently, \[2\,q_0(t) + q_1(t) = 2 - M(t) = \Delta(t) \leq \frac{\Delta(T)}{\Delta(T)+\left(1-\Delta(T)\right)\expo^{(t-T)/8}} ~~\textrm{for all $t \geq T$},\] from which the advertised bound \eqref{eq:bound_on_Delta} follows readily.

The case when $M(0) < 1$ can be handled in a rather similar way, once we notice the following invariance of the ODE system
\begin{equation}\label{eq:ODE_with_k=1}
\begin{cases}
q'_0 &= q_1\,\left(1 - \frac{M}{2}\right) - q_0\,\frac{M}{2}, \\
q'_1 &= q_2\,\left(1 - \frac{M}{2}\right) + q_0\,\frac{M}{2} - q_1, \\
q'_2 &= q_1\,\frac{M}{2} - q_2\,\left(1 - \frac{M}{2}\right).
\end{cases}
\end{equation}
under a state-flip transformation prescribed by $\hat{q}_0 \coloneqq q_2$, $\hat{q}_1 \coloneqq q_1$, and $\hat{q}_2 \coloneqq q_0$. Indeed, for the distribution $\hat{q}$, the mean value is given by $\widehat{M} \coloneqq \hat{q}_1 + 2\,\hat{q}_2 = q_1 + 2\,q_0 = 2 - M$. If we replace $(q_0,q_1,q_2, M)$ by $(\hat{q}_2,\hat{q}_1,\hat{q}_0,2-\widehat{M})$, after a direct substitution and some algebraic simplifications we obtain exactly the same-form system for $\hat{q}$ but with mean $\widehat{M}$:
\begin{equation}\label{eq:ODE_with_k=1_rewrite}
\begin{cases}
\hat{q}'_0 &= \hat{q}_1\,\left(1 - \frac{\widehat{M}}{2}\right) - \hat{q}_0\,\frac{\widehat{M}}{2}, \\
\hat{q}'_1 &= \hat{q}_2\,\left(1 - \frac{\widehat{M}}{2}\right) + \hat{q}_0\,\frac{\widehat{M}}{2} - \hat{q}_1, \\
\hat{q}'_2 &= \hat{q}_1\,\frac{\widehat{M}}{2} - \hat{q}_2\,\left(1 - \frac{\widehat{M}}{2}\right).
\end{cases}
\end{equation}
Thus the ODE system \eqref{eq:ODE_with_k=1} is invariant under the flip ${\bf q} \mapsto \hat{{\bf q}}$. As a result, if $M(0) < 1$ then $\widehat{M}(0) > 1$, whence all arguments used in the case $M(0) > 1$ carry over to the probability distribution $\hat{{\bf q}}$ with the same constants. Mapping back to the original variables then provides the corresponding statement for ${\bf q}$ (with the same quantitative rates of convergence).
\end{proof}

\subsection{Large-time convergence for finitely many opinion states}\label{subsec:sec3.2}

We devote the remainder of this section to analyzing the long-time behavior of the mean-field opinion dynamics \eqref{eqn:ODE_main} for general $k \geq 2$. It is worth emphasizing that when $k = 1$, the mean opinion $M(t)$ interacts with only a single additional degree of freedom given by the central mass $q_1$. Consequently, the system \eqref{eqn:ODE_main} with $k = 1$ effectively reduces to a two-dimensional coupled ODE system, and the asymptotic behavior of the full solution ${\bf q}(t)$ can be inferred solely from the evolution of the mean opinion $M(t)$. In contrast, for a general $k \geq 2$, the dynamics of $M(t)$ no longer provide sufficient information to characterize the evolution of ${\bf q}(t)$. The analysis therefore becomes qualitatively more intricate and technically demanding in the general case when $k\geq 2$.


First of all, we investigate the simplest situation where the initial datum ${\bf q}(0)$ enjoys a precise symmetry property, described as follows:
\begin{definition}
We say that a probability mass function ${\bf q} \in \mathcal{P}(\Omega)$ is symmetric if $q_n = q_{2k-n}$ for all $0\leq n \leq 2k$. In addition, we write ${\bf q} \in \mathcal{S}$ if and only if ${\bf q} \in \mathcal{P}(\Omega)$ is symmetric.
\end{definition}

\begin{proposition}\label{prop:symmetric_IC}
Assume that ${\bf q} = (q_0,\ldots,q_{2k})$ is a classical solution to the nonlinear system of ODEs \eqref{eqn:ODE_main} with ${\bf q}(t=0) \in \mathcal{S}$. Then for all $t \geq 0$ we have \[\|{\bf q}(t) - {\bf q}^u\|_{\ell^2} \leq \expo^{-\frac{t}{8\,(2k+1)^2}} \|{\bf q}(0) - {\bf q}^u\|_{\ell^2}.\]
\end{proposition}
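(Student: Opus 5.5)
The plan is to use the symmetry to collapse the nonlinear system \eqref{eqn:ODE_main} into a linear one and then run a discrete energy (Poincaré) argument.

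\textbf{Step 1: symmetry is propagated.} Given a solution ${\bf q}(t)$, define $\tilde q_n(t) \coloneqq q_{2k-n}(t)$. Its mean is $\widetilde M = 2k - M$. Substituting into \eqref{eqn:ODE_main} shows that $\tilde{\bf q}$ solves the same system; this is the general-$k$ version of the flip invariance used in the proof of Theorem \ref{thm:conv_k=1}. The right-hand side of \eqref{eqn:ODE_main} is a polynomial in ${\bf q}$, hence locally Lipschitz, and Lemma \ref{lem:1} keeps solutions in the compact simplex, so solutions are unique. Since $\tilde{\bf q}(0) = {\bf q}(0)$ by assumption, we get $\tilde{\bf q} \equiv {\bf q}$, i.e. ${\bf q}(t) \in \mathcal{S}$ for all $t\ge 0$. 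In particular $M(t) = k$ for all $t$.

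\textbf{Step 2: the system becomes linear.} With $M \equiv k$, the system reads
\[
q'_0 = \tfrac12 (q_1 - q_0), \qquad q'_n = \tfrac12 (q_{n+1} - 2q_n + q_{n-1}) \ (0<n<2k), \qquad q'_{2k} = \tfrac12 (q_{2k-1} - q_{2k}).
\]
This is $\tfrac12$ times the discrete Neumann Laplacian on $\{0,\ldots,2k\}$, and ${\bf q}^u$ is a stationary point.

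\textbf{Step 3: energy identity.} Set ${\bf e} \coloneqq {\bf q} - {\bf q}^u$. Then $\sum_n e_n = 0$ and ${\bf e}$ satisfies the same linear system. Summation by parts, using the reflecting boundary terms, gives
\[
\frac{\dd}{\dd t}\,\frac12 \|{\bf e}\|_{\ell^2}^2 = -\frac12 \sum_{n=0}^{2k-1} (e_{n+1} - e_n)^2 .
\]

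\textbf{Step 4: discrete Poincaré inequality (the only real obstacle).} I need a bound, for mean-zero ${\bf e}$, of the form
\[
\|{\bf e}\|_{\ell^2}^2 \le C_k \sum_{n=0}^{2k-1} (e_{n+1} - e_n)^2 .
\]
A crude route suffices. Because the entries sum to zero, $e_n = \frac{1}{2k+1} \sum_m (e_n - e_m)$. Each $e_n - e_m$ is a telescoping sum of consecutive differences, so Cauchy–Schwarz gives $|e_n - e_m|^2 \le (2k)\sum_j (e_{j+1}-e_j)^2$. Summing over $n$ yields $C_k \le 2k(2k+1) \le (2k+1)^2$. (Alternatively, the explicit cosine spectrum of the Neumann Laplacian gives gap $1-\cos\!\big(\pi/(2k+1)\big) \ge 2/(2k+1)^2$.)

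\textbf{Step 5: conclude.} Combining Steps 3 and 4,
\[
\frac{\dd}{\dd t}\|{\bf e}\|^2 \le -\frac{1}{(2k+1)^2}\,\|{\bf e}\|^2 ,
\]
so Grönwall gives $\|{\bf e}(t)\| \le \expo^{-t/(2(2k+1)^2)}\,\|{\bf e}(0)\|$. This rate is stronger than the claimed $\expo^{-t/(8(2k+1)^2)}$, which therefore follows. The only point needing care is the constant bookkeeping in Step 4.
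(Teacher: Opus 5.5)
Your proposal is correct and follows essentially the same route as the paper. Both arguments use invariance of $\mathcal{S}$ to reduce to the linear system $\mathbf{q}' = L\mathbf{q}$, then the identity $\langle -L\mathbf{x},\mathbf{x}\rangle = \frac12\sum_{n}(x_{n+1}-x_n)^2$, a telescoping discrete Poincar\'e inequality, and Gr\"onwall. You justify the invariance of $\mathcal{S}$ properly via the flip symmetry and ODE uniqueness, which the paper only asserts. Your tighter Cauchy--Schwarz bookkeeping (constant $2k(2k+1)$ versus the paper's $4(2k+1)^2$) gives the stronger rate $\expo^{-t/(2(2k+1)^2)}$, which implies the stated bound.
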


\begin{proof}
For the ease of notation, we identify ${\bf q}$ with its transposed version ${\bf q}^\intercal$. We notice that the symmetric subspace $\mathcal{S} \subset \mathcal{P}(\Omega)$ is invariant under the evolution of the mean-field ODE system \eqref{eqn:ODE_main}. In other words, if $q_n(0) = q_{2k-n}(0)$ for all $0\leq n\leq 2k$, then the solution ${\bf q}(t)$ satisfies $q_n(t) = q_{2k-n}(t)$ for all $t\geq 0$ and all $0\leq n\leq 2k$. In particular, the system boils down to the following linear system:
\begin{equation}\label{eq:linear_ODE_general}
\frac{\dd}{\dd t} {\bf q} = L\,{\bf q},
\end{equation}
where
\[L \coloneqq \begin{pmatrix}
-1/2 & 1/2 & 0 & 0 & \cdots & 0 & 0 & 0 \\
1/2 & -1 & 1/2 & 0 & \cdots & 0 & 0 & 0 \\
0 & 1/2 & -1 & 1/2 & \cdots & 0 & 0 & 0 \\
\vdots & \ddots & \ddots & \ddots & \ddots & \vdots & \vdots & \vdots \\
0 & 0 & 0 & 0 & \cdots & 1/2 & -1 & 1/2 \\
0 & 0 & 0 & 0 & \cdots & 0 & 1/2 & -1/2
\end{pmatrix}.\]
The linear system admits a unique solution given by ${\bf q}(t) = \expo^{t\,L}\,{\bf q}(0)$, and we can easily verify that the uniform distribution ${\bf q}^u$ is an equilibrium associated with the linear dynamics \eqref{eq:linear_ODE_general}. In order to prove the claimed exponential convergence towards ${\bf q}^u$, it suffices to investigate the spectral gap of the matrix $L$.

The matrix $L$ has a simple zero eigenvalue associated with the (normalized) eigenvector ${\bf q}^u$, thus its spectral gap will be provided by the second eigenvalue. Thanks to the content of Lemma \ref{lem:auxillary} below, we have
\begin{equation}\label{eq:spectral_gap}
\langle -L\left({\bf q} - {\bf q}^u\right), {\bf q} - {\bf q}^u \rangle \geq \frac{1}{8\,(2k+1)^2}\,\|{\bf q} - {\bf q}^u\|^2_{\ell^2}
\end{equation}
for all ${\bf q} \in \mathcal{P}(\Omega)$. Based on the aforementioned control on the spectral gap of $L$, we deduce that
\begin{equation*}
\begin{aligned}
\frac 12\,\frac{\dd}{\dd t} \|{\bf q}(t) - {\bf q}^u\|^2_{\ell^2} &= \langle L\,{\bf q}(t), {\bf q}(t) - {\bf q}^u \rangle \\
&= \langle L\left({\bf q}(t) - {\bf q}^u\right), {\bf q}(t) - {\bf q}^u \rangle \\
&\leq -\frac{1}{8\,(2k+1)^2}\,\|{\bf q}(t) - {\bf q}^u\|^2_{\ell^2},
\end{aligned}
\end{equation*}
whence the proof of Proposition \ref{prop:symmetric_IC} is completed after a routine application of Gr\"onwall's inequality.
\end{proof}

\begin{lemma}\label{lem:auxillary}
Let ${\bf x} = (x_0,x_1,\ldots, x_{2k}) \in \mathbb{R}^{2k+1}$ satisfies $\sum_{n=0}^{2k} x_n = 0$. Then we have
\[\langle -L\,{\bf x}, {\bf x} \rangle \geq \frac{1}{8\,(2k+1)^2}\,\|{\bf x}\|^2_{\ell^2}.\]
\end{lemma}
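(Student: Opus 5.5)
Since $L$ is symmetric with zero row sums, the plan is to write the quadratic form as a Dirichlet form and then apply a discrete Poincaré inequality on the path $\{0,\ldots,2k\}$. Concretely, $L = -\tfrac12 \mathcal{D}^\intercal\mathcal{D}$, where $\mathcal{D}$ is the $2k\times(2k+1)$ forward difference matrix $(\mathcal{D}{\bf x})_n = x_{n+1}-x_n$. Checking the boundary rows (diagonal entries $-1/2$ at the two ends and $-1$ in the interior) confirms this, and it gives the identity
\[\langle -L\,{\bf x}, {\bf x}\rangle = \frac12\sum_{n=0}^{2k-1} (x_{n+1}-x_n)^2 .\]
It therefore suffices to show $\sum_{n=0}^{2k-1}(x_{n+1}-x_n)^2 \geq \frac{1}{4(2k+1)^2}\,\|{\bf x}\|_{\ell^2}^2$ whenever $\sum_n x_n = 0$. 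This constant is far from sharp: the true gap is $1-\cos(\pi/(2k+1)) \sim \pi^2/(2(2k+1)^2)$. That slack means a crude elementary argument is enough and no eigenvalue computation is needed.

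For the Poincaré step I would use the standard telescoping argument. For any $i,j$, Cauchy--Schwarz gives
\[(x_i-x_j)^2 \leq |i-j|\sum_{n}(x_{n+1}-x_n)^2 \leq 2k\,D, \qquad D \coloneqq \sum_{n=0}^{2k-1}(x_{n+1}-x_n)^2 .\]
Since the mean of ${\bf x}$ is zero, we have $\sum_{i,j}(x_i-x_j)^2 = 2(2k+1)\|{\bf x}\|^2_{\ell^2}$. Summing the pointwise bound over all $(2k+1)^2$ pairs then yields
\[2(2k+1)\|{\bf x}\|^2_{\ell^2} \leq (2k+1)^2\cdot 2k\,D, \qquad\text{so}\qquad \|{\bf x}\|^2_{\ell^2} \leq k(2k+1)\,D \leq (2k+1)^2 D .\]
Hence $\langle -L{\bf x},{\bf x}\rangle = D/2 \geq \frac{1}{2(2k+1)^2}\|{\bf x}\|^2_{\ell^2}$, which is even stronger than the claimed $\frac{1}{8(2k+1)^2}$. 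A cleaner version sums $(x_i-x_j)^2 \leq |i-j|D$ weighted by $|i-j|$, but the crude bound already suffices.

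The only point needing care is verifying the factorization $L=-\tfrac12\mathcal{D}^\intercal\mathcal{D}$ at the boundary rows, since the corner entries differ from the interior ones. After that the argument is routine, and I expect no real obstacle.
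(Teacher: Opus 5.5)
Your proof is correct. The first step, the Dirichlet-form identity $\langle -L{\bf x},{\bf x}\rangle=\frac12\sum_{n=0}^{2k-1}(x_{n+1}-x_n)^2$, is exactly the paper's. The two routes differ only in how the discrete Poincar\'e inequality is obtained.

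The paper writes $x_n=(x_n-x_0)-\frac{1}{2k+1}\sum_m(x_m-x_0)$, which is just $x_n=\frac{1}{2k+1}\sum_m(x_n-x_m)$ routed through the base point $x_0$. It then uses the triangle inequality to get the pointwise bound $|x_n|\le 2\sum_m|x_{m+1}-x_m|$. Finally it squares, sums over $n$, and applies Cauchy--Schwarz to pass from the $\ell^1$ to the $\ell^2$ norm of the increments, ending with $\|{\bf x}\|^2_{\ell^2}\le 4(2k+1)^2D$.

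You stay in $\ell^2$ throughout. You combine the exact identity $\sum_{i,j}(x_i-x_j)^2=2(2k+1)\|{\bf x}\|^2_{\ell^2}$ for mean-zero vectors with the pairwise bound $(x_i-x_j)^2\le|i-j|\,D$. This loses nothing to the triangle inequality or to the detour through $x_0$, and your intermediate inequality $\|{\bf x}\|^2_{\ell^2}\le k(2k+1)D$ already gives the constant $\frac{1}{2k(2k+1)}$. That beats the paper's $\frac{1}{8(2k+1)^2}$ by a factor $8+4/k$; even your cruder simplification $\frac{1}{2(2k+1)^2}$ is a factor $4$ better. It would correspondingly speed up the rate in Proposition~\ref{prop:symmetric_IC}.

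The paper's version passes through a pointwise (sup-norm) estimate along the way, which could be convenient elsewhere. For the lemma as stated, your argument is both shorter and sharper.
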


\begin{proof}
A straightforward computation shows that \[\langle -L\,{\bf x}, {\bf x} \rangle = \frac 12\,\sum_{n=0}^{2k-1} |x_{n+1}-x_n|^2.\] Since $\sum_{n=0}^{2k} x_n = 0$, we can write each coordinate $x_n$ as a telescoping sum as follows:
\begin{align*}
x_n &= x_0 + \sum_{m=0}^{n-1} (x_{m+1} - x_m) - \frac{1}{2k+1}\,\sum_{m=0}^{2k} x_m  \\
&= x_0 + \sum_{m=0}^{n-1} (x_{m+1} - x_m) - \frac{1}{2k+1}\,\sum_{m=0}^{2k} \left(x_0 + \sum_{r=0}^{m-1} (x_{r+1} - x_r) \right) \\
&= \sum_{m=0}^{n-1} (x_{m+1} - x_m) -  \frac{1}{2k+1}\,\sum_{m=0}^{2k}\sum_{r=0}^{m-1} (x_{r+1} - x_r).
\end{align*}
Consequently, the triangle inequality gives rise to
\begin{align*}
|x_n| &\leq \sum_{m=0}^{n-1} |x_{m+1} - x_m| + \frac{1}{2k+1}\,\sum_{m=0}^{2k}\sum_{r=0}^{m-1} |x_{r+1} - x_r| \\
&\leq \sum_{m=0}^{2k-1} |x_{m+1} - x_m| + \frac{1}{2k+1}\,\sum_{m=0}^{2k}\sum_{r=0}^{2k-1} |x_{r+1} - x_r| \\
&= 2\,\sum_{m=0}^{2k-1} |x_{m+1} - x_m|.
\end{align*}
Squaring and summing over $0\leq n\leq 2k$ yields
\[\sum_{n=0}^{2k} |x_n|^2 \leq 4\,(2k+1)\,\left(\sum_{n=0}^{2k-1} |x_{n+1} - x_n|\right)^2 \leq 4\,(2k+1)^2\,\sum_{n=0}^{2k-1} |x_{n+1} - x_n|^2,\]
from which the advertised estimate follows.
\end{proof}


For a generic and non-symmetric initial opinion profile ${\bf q}(0) \in \cP(\Omega) \setminus \cS$, one naturally expects that the solution ${\bf q}(t)$ of the mean-field dynamical system \eqref{eqn:ODE_main} converges to either the consensus state ${\bf q}^l$ at the leftmost opinion or the consensus state ${\bf q}^u$ at the rightmost opinion. We denote by $\cA^l$ and $\cA^r$ the basins of attraction of the (stable) equilibrium opinion profiles ${\bf q}^l$ and ${\bf q}^r$, respectively. In other words, we define
\begin{equation}\label{eq:Al}
\cA^l \coloneqq \left\{{\bf q}(0) \in \cP(\Omega) \mid \lim_{t\to \infty} {\bf q}(t) = {\bf q}^l \right\}~~\textrm{and}~~ \cA^r \coloneqq \left\{{\bf q}(0) \in \cP(\Omega) \mid \lim_{t\to \infty} {\bf q}(t) = {\bf q}^r \right\}.
\end{equation}
Due to the high-dimensionality of the nonlinear system \eqref{eqn:ODE_main} when $k\geq 2$, we believe that it is no longer feasible to obtain a complete characterization of $\cA^l$ or $\cA^r$ anymore. However, we managed to identify certain explicit subsets of these basins of attraction, which is summarized below:

\begin{theorem}\label{thm:basins_of_attraction}
Assume that ${\bf q} = (q_0,\ldots,q_{2k})$ is a classical solution to the nonlinear system of ODEs \eqref{eqn:ODE_main} with ${\bf q}(t=0) \in \cP(\Omega)$. Then we have the following convergence guarantees:
\begin{enumerate}[label=(\roman*)]
\item If $M(0) > 2k-1$, then ${\bf q}(t) \xrightarrow{t \to \infty} {\bf q}^r$ component-wisely.
\item If $M(0) < 1$, then ${\bf q}(t) \xrightarrow{t \to \infty} {\bf q}^l$ component-wisely.
\end{enumerate}
\end{theorem}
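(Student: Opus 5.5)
The plan is to imitate the proof of Proposition~\ref{prop:conv_k=1}: find a scalar quantity that is monotone along trajectories starting in the region $\{M>2k-1\}$, deduce that it converges, and then identify the $\omega$-limit among the three equilibria of Lemma~\ref{lem:1}. Statement (ii) will follow from (i) by the flip symmetry $\hat q_n\coloneqq q_{2k-n}$, $\widehat M = 2k-M$. As in the proof of Theorem~\ref{thm:conv_k=1}, one checks directly that \eqref{eqn:ODE_main} is invariant under this flip, and $M(0)<1$ becomes $\widehat M(0)>2k-1$.

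\textbf{Step 1: evolution of the mean.} Each agent moves up at rate $M/(2k)$ when $n<2k$ and down at rate $(2k-M)/(2k)$ when $n>0$. Summing $n\,q_n'$ over \eqref{eqn:ODE_main} should therefore give
\begin{equation*}
2k\,M' = M\,(1-q_{2k}) - (2k-M)\,(1-q_0).
\end{equation*}
Set $D\coloneqq 2k-M=\sum_n (2k-n)\,q_n$, $s\coloneqq 1-q_{2k}$ and $x\coloneqq s-q_0=\sum_{0<n<2k} q_n$. Then $2kM' = 2k\,s - D\,(1+x)$.

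\textbf{Step 2: the sign of $M'$ on $\{D<1\}$.} The aim is to show $M'\ge 0$ whenever $D<1$. This would make $\{M>2k-1\}$ forward invariant and $M$ nondecreasing. Using $D\le 2k\,q_0+(2k-1)\,x$, the required inequality $D(1+x)\le 2k(q_0+x)$ reduces to
\begin{equation*}
x\,\bigl(2k\,q_0+(2k-1)\,x-1\bigr)\le 0 .
\end{equation*}
This is exactly where the hypothesis must enter. In the extreme configurations, namely mass only at $0$ and $2k$, or mass only at $2k-1$ and $2k$, it holds with equality or strictly. In general, however, $D<1$ gives only $2kq_0+x\le D<1$, not the stronger $2kq_0+(2k-1)x\le 1$.

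I expect this to be the main obstacle, and I would attack it in the following order.
\begin{enumerate}
\item Redo the estimate keeping the exact expression $D=\sum_{0<n<2k}(2k-n)\,q_n+2k\,q_0$ instead of the crude upper bound, and optimize over the distribution of the interior mass.
\item If $M'$ can genuinely become negative, replace $M$ by a better Lyapunov quantity. Candidates are $q_{2k}$ itself, with $q_{2k}'=\frac{M}{2k}q_{2k-1}-\frac{D}{2k}q_{2k}$, or a weighted mean $\sum_n w_n q_n$ with convex weights.
\item Show that the region $\{D<1\}$ is forward invariant via a barrier argument at the boundary $D=1$, rather than through global monotonicity.
\end{enumerate}

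\textbf{Step 3: identification of the limit.} Once $M$ is nondecreasing, bounded by $2k$, and satisfies $M(t)>2k-1$, the limit $M_\infty$ exists. Since all derivatives are uniformly bounded, $M'(t)\to0$ as $t\to\infty$. Passing to limits along subsequences, every $\omega$-limit point $\bar{\bf q}$ is a probability vector with mean at least $M(0)>2k-1$. To handle subsequential limits, I would use that the $\omega$-limit set is invariant and connected, together with the monotonicity of $M$. This gives $M\equiv M_\infty$ on the $\omega$-limit set, hence $M'\equiv 0$ there. Combining $M'=0$ with the strict inequality from Step~2 should force $x=0$ and then $q_0=0$, so $\bar{\bf q}$ is an equilibrium of Lemma~\ref{lem:1}. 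Among these, only ${\bf q}^r$ has mean exceeding $2k-1$, since ${\bf q}^u$ has mean $k\le 2k-1$ and ${\bf q}^l$ has mean $0$. Hence ${\bf q}(t)\to{\bf q}^r$ componentwise.
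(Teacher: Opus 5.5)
There is a genuine gap at Step 2, and the rest of your plan depends on it. To prove $D(1+x)\le 2k(q_0+x)$ you replace $D$ by the upper bound $2kq_0+(2k-1)x$ in \emph{both} places it occurs, including the factor $D$ that multiplies $x$. That discards exactly the information $D<1$ you need. The reduced inequality $x\bigl(2kq_0+(2k-1)x-1\bigr)\le 0$ actually fails at points of $\{D<1\}$: for $k=2$, $q_3=0.9$, $q_4=0.1$ one has $D=0.9$ but $2kq_0+(2k-1)x-1=1.7$, while the true value is $2kM'=1.89>0$. You correctly see that the crude route fails, but you leave the sign of $M'$ open and even consider that it might become negative. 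As a result, none of the following is actually established: forward invariance of $\{M>2k-1\}$, monotone convergence of $M$, or the ``strict inequality'' you later use in Step 3.

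The missing idea is an exact identity, not an estimate. Start from your own formula $2kM'=2ks-D(1+x)$ with $s=q_0+x$. Substitute $D=2kq_0+\sum_{0<n<2k}(2k-n)q_n$ only in the term $D\cdot 1$, and leave $Dx$ untouched:
\[
2k\,M'=\sum_{0<n<2k} n\,q_n-D\,x=\sum_{n=1}^{2k-1}\bigl(n-(2k-M)\bigr)\,q_n .
\]
This is precisely the paper's identity $M'=\frac{1}{2k^2}\sum_{j=1}^{2k-1}\bigl(k(j+M)-2k^2\bigr)q_j$; the paper obtains it by eliminating $q_0$ and $q_{2k}$ through the mass and mean constraints.

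On $\{D<1\}$ every coefficient satisfies $n-D\ge 1-D>0$, so $M'\ge\frac{1-D}{2k}\,x\ge 0$. This gives forward invariance and monotonicity of $M$. Moreover, $1-D(t)=M(t)-(2k-1)\ge M(0)-(2k-1)>0$, so $M'(t)\to0$ (via boundedness of $M''$) immediately forces $x(t)\to0$.

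From there your Step 3 goes through. Either use your $\omega$-limit argument, or follow the paper: $q_{2k}$ and $q_0$ then converge, the limit must be an equilibrium of Lemma~\ref{lem:1}, and ${\bf q}^r$ is the only one with mean exceeding $2k-1$. Deducing (ii) from the flip $\hat q_n=q_{2k-n}$ is a valid alternative to the paper's mirror-image bound $M'\le 0$ on $\{M<1\}$.
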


\begin{proof}
The key quantity we shall investigate is again the average opinion $M = \sum_{n=0}^{2k} n\,q_n$, whose temporal evolution is dictated by
\begin{equation*}
\begin{aligned}
M' &= \frac{2k-M}{2k}\,\sum_{n=1}^{2k-1} n\,q_{n+1} + \frac{M}{2k}\,\sum_{n=1}^{2k-1} n\,q_{n-1} - \sum_{n=1}^{2k-1} n\,q_n + M\,q_{2k-1} - (2k-M)\,q_{2k} \\
&= \frac{2k-M}{2k}\left(M-q_1-(1-q_0-q_1)\right) + \frac{M}{2k}\left(M-(2k-1)\,q_{2k-1}-2k\,q_{2k}+1-q_{2k-1}-q_{2k}\right) \\
&\qquad - (M-2k\,q_{2k}) +
M\,q_{2k-1} - (2k-M)\,q_{2k} \\
&= \frac{M}{k} - 1 + \frac{2k-M}{2k}\,q_0 - \frac{M}{2k}\,q_{2k}.
\end{aligned}
\end{equation*}
We observe that
\begin{equation*}
\begin{aligned}
\frac{2k-M}{2k}\,q_0 - \frac{M}{2k}\,q_{2k} &= \frac{2k-M}{2k}\,\frac{2k-M-\sum_{j=1}^{2k-1} (2k-j)\,q_j}{2k} - \frac{M}{2k}\,\frac{M-\sum_{j=1}^{2k-1} j\,q_j}{2k} \\
&= \frac{(2k-M)^2 - (2k-M)\,\sum_{j=1}^{2k-1} (2k-j)\,q_j - M^2 + M\,\sum_{j=1}^{2k-1} j\,q_j}{4k^2} \\
&= 1 - \frac{M}{k} + \frac{1}{4k^2}\,\sum_{j=1}^{2k-1} \left(2\,k\,(j+M) - 4\,k^2\right)\,q_j.
\end{aligned}
\end{equation*}
Consequently, we arrive at the following expression:
\begin{equation}\label{eq:M_derivative}
M' = \frac{1}{2\,k^2}\,\sum_{j=1}^{2k-1} \left(k\,(j+M) - 2\,k^2\right)\,q_j.
\end{equation}
Bearing in mind the trivial bound $0\leq M \leq 2\,k$ which is valid for all times, we deduce that if at some (finite) time $t\geq 0$ it holds that $M(t) > 2k-1$, then
\begin{equation}\label{eq:I1}
\begin{aligned}
M'(t) &= \frac{1}{2\,k^2}\sum_{j=1}^{2k-1} \left(k\,(j+M(t)) - 2\,k^2\right)\,q_j(t) \\
&\geq \frac{1}{2\,k^2}\sum_{j=1}^{2k-1} \left(k\,(1+2k-1)-2\,k^2\right)\,q_j(t) = 0.
\end{aligned}
\end{equation}
Therefore, if ${\bf q}(t=0) \in \cP(\Omega)$ is chosen such that $M(0) > 2k-1$, $M(t)$ will be non-decreasing in $t$ and hence converges to some $M_\infty \coloneqq \lim_{t \to \infty} M(t) \in (2k-1,2k]$. The uniform boundedness of $M(t)$ together with the uniform boundedness of $M'(t)$ implies that $\lim_{t\to\infty} M'(t) = 0$. Consequently, $\lim_{t \to \infty} q_j(t) = 0$ for each $1\leq j \leq 2k-1$.
From this, we further deduce that $\lim_{t\to\infty} q_0(t) = 0$ and $\lim_{t\to\infty} q_{2k}(t) = 1$. In other words, the full state vector satisfies $\lim_{t\to\infty} \mathbf{q}(t) = \mathbf{q}^r$ whenever $M(0) > 2k - 1$. Similarly, if at some (finite) time $t\geq 0$ it holds that $M(t) < 1$, then
\begin{equation}\label{eq:I2}
\begin{aligned}
M'(t) &= \frac{1}{2\,k^2}\sum_{j=1}^{2k-1} \left(k\,(j+M(t)) - 2\,k^2\right)\,q_j(t) \\
&\leq \frac{1}{2\,k^2}\sum_{j=1}^{2k-1} \left(k\,(2k-1+1)-2\,k^2\right)\,q_j(t) = 0.
\end{aligned}
\end{equation}
As a result, if the initial condition ${\bf q}(0) \in \mathcal{P}(\Omega)$ satisfies $M(0) < 1$, then $M(t)$ is non-increasing in time and therefore converges to a limit $M_\infty \coloneqq \lim_{t\to\infty} M(t) \in [0,1)$. By an argument entirely parallel to the one used above, we conclude that $\lim_{t \to \infty} \mathbf{q}(t) = \mathbf{q}^l$ whenever $M(0) < 1$.
\end{proof}

We illustrate the convergence results established in Proposition~\ref{prop:symmetric_IC} and Theorem~\ref{thm:basins_of_attraction} through a series of numerical experiments. Throughout, we set $k = 2$ and solve the ODE system \eqref{eqn:ODE_main} using again the standard fourth-order Runge--Kutta method with time step $\Delta t = 0.01$. In Figure~\ref{fig:numerics2}-left and Figure~\ref{fig:numerics2}-right, we plot respectively, the evolution of the opinion profile $\mathbf{q}(t)$ and the corresponding trajectory of the average opinion $M(t)$. The initial data employed in the top, middle, and bottom panels are $\mathbf{q}(0) = (0.31, 0.54, 0.05, 0.05, 0.05)$, $\mathbf{q}(0) = (0.1, 0.25, 0.3, 0.25, 0.1)$, and \\$\mathbf{q}(0) = (0.05, 0.05, 0.05, 0.549, 0.301)$, which yield initial average opinions $M(0) = 0.99 < 1$, $M(0) = 2$, and $M(0) = 3.001 > 3$, respectively. The numerical outcomes exhibit convergence consistent with the predictions of Proposition~\ref{prop:symmetric_IC} and Theorem~\ref{thm:basins_of_attraction}.

\def\figscale{0.5}
\def\figwidth{0.47\textwidth}
\begin{figure}
	\begin{subfigure}{\figwidth}
		\centering
		\includegraphics[scale=\figscale]{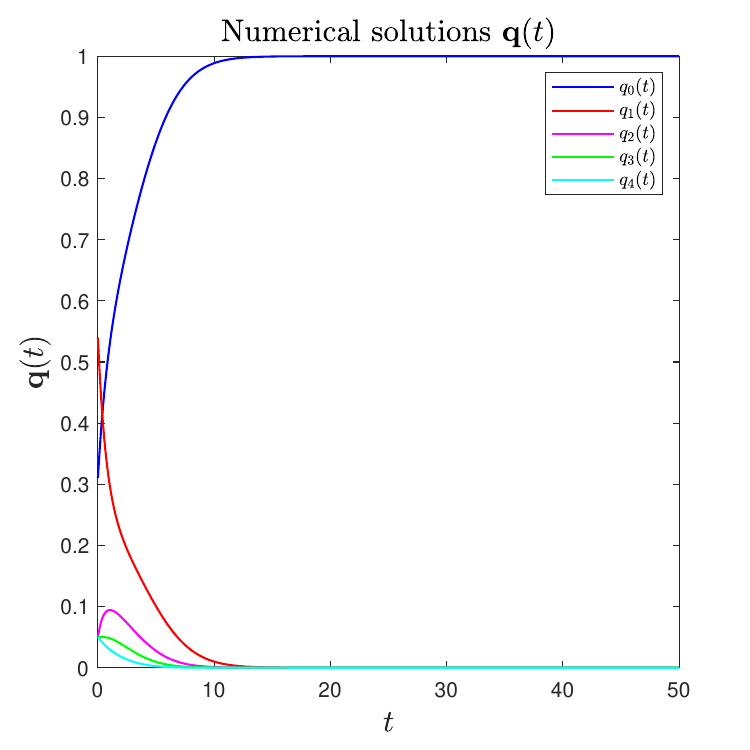}
	\end{subfigure}
	\hspace{0.1in}
	\begin{subfigure}{\figwidth}
		\centering
		\includegraphics[scale=\figscale]{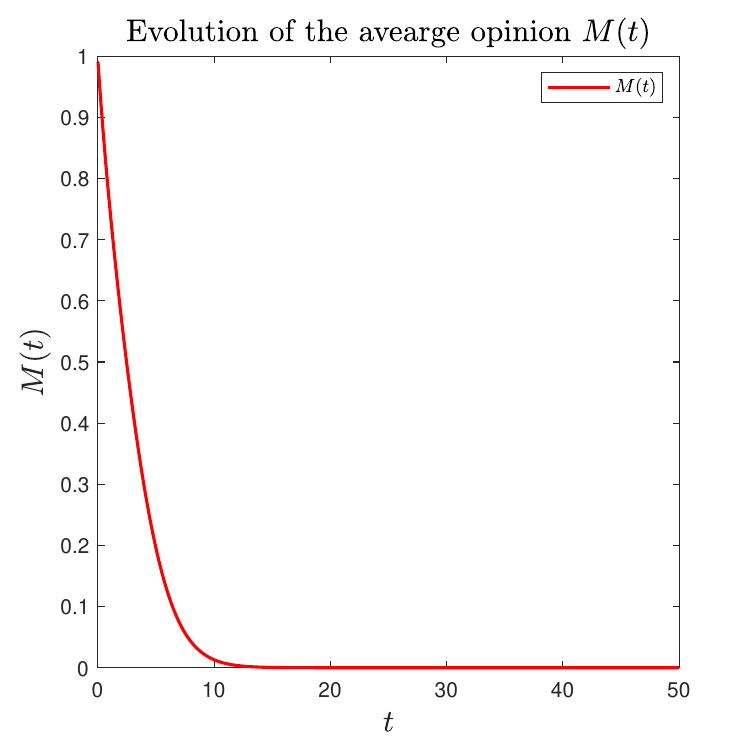}
	\end{subfigure}
	\\
	\begin{subfigure}{\figwidth}
		\centering
		\includegraphics[scale=\figscale]{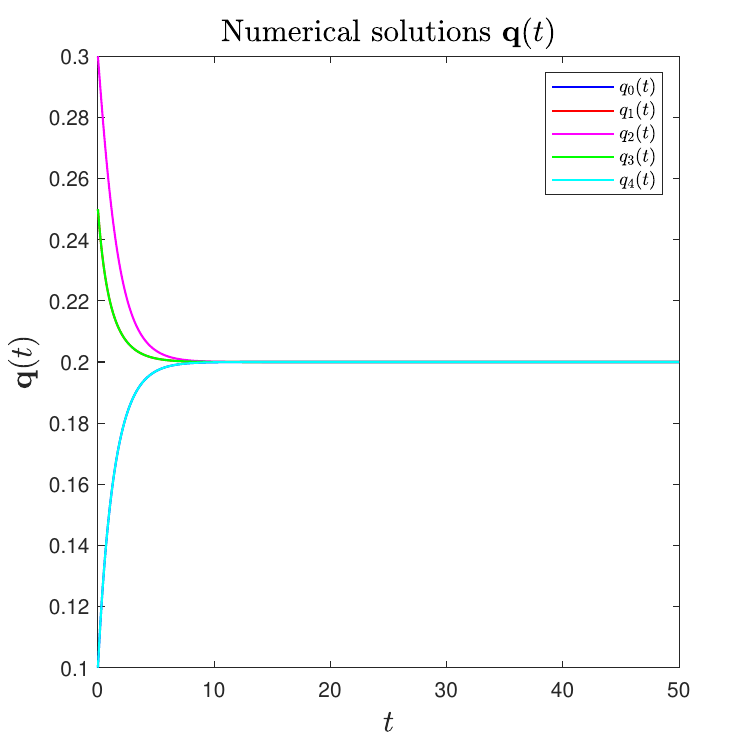}
	\end{subfigure}
	\hspace{0.1in}
	\begin{subfigure}{0.45\textwidth}
		\centering
		\includegraphics[scale=\figscale]{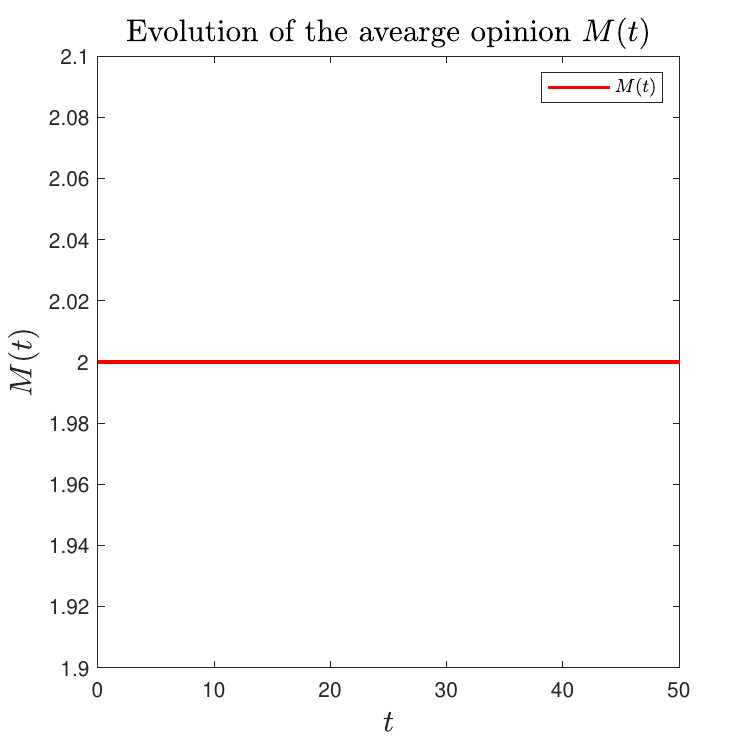}
	\end{subfigure}
    \\
    \begin{subfigure}{\figwidth}
    	\centering
    	\includegraphics[scale=\figscale]{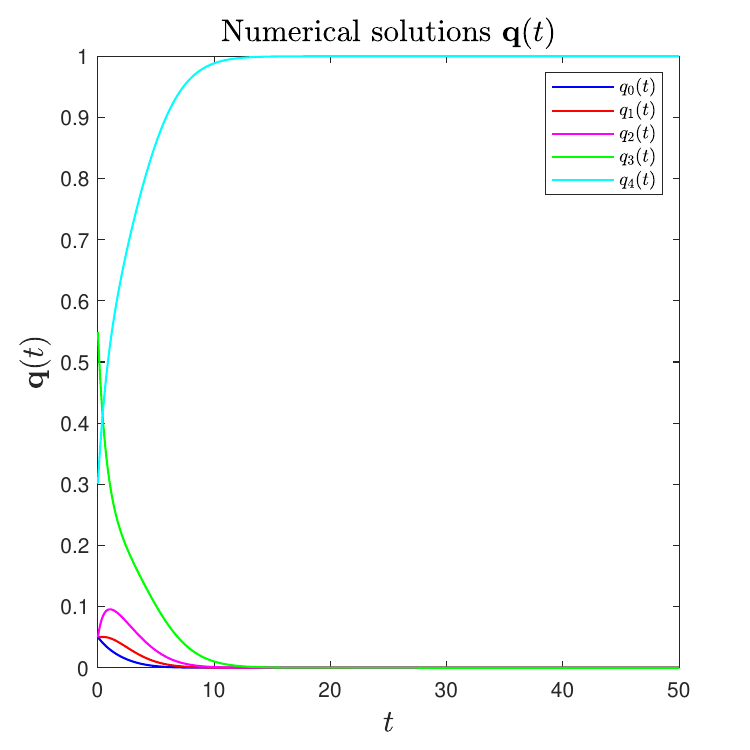}
    \end{subfigure}
    \hspace{0.1in}
    \begin{subfigure}{\figwidth}
    	\centering
    	\includegraphics[scale=\figscale]{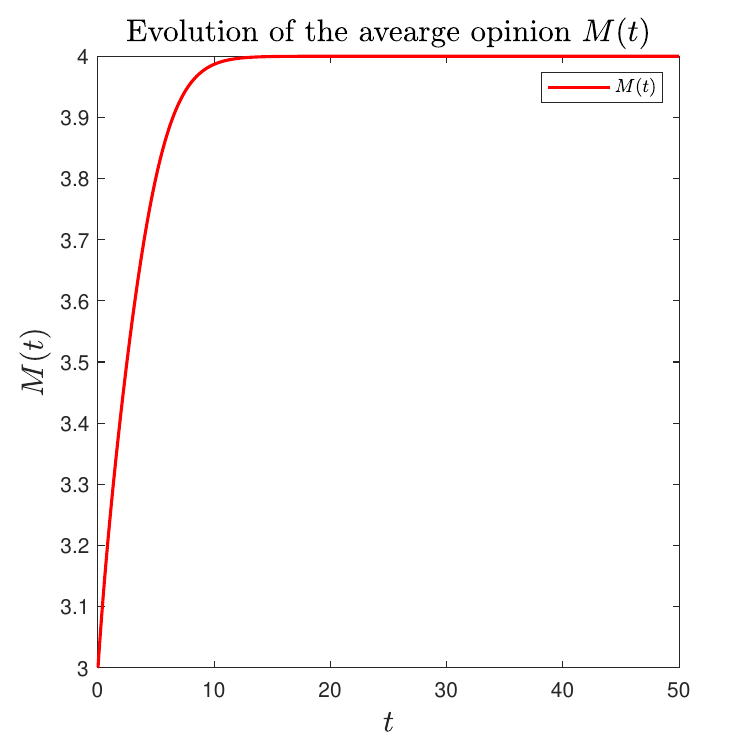}
    \end{subfigure}
	
\caption{{\bf Left}: Simulations of the mean-field dynamical system \eqref{eqn:ODE_main} with $k=2$, for three different initial datum with $M(0) < 1$ (top), symmetric initial opinion distribution (middle), and $M(0) > 3$ (bottom). {\bf Right}: Evolution of the corresponding average opinion with respect to time.}
\label{fig:numerics2}
\end{figure}

\begin{remark}
The proof of Theorem \ref{thm:basins_of_attraction} implies that the regions \[\Omega^l \coloneqq \left\{{\bf q} \in \cP(\Omega) \mid M < 1\right\} \subset \cA^l ~~\textrm{and}~~ \Omega^r \coloneqq \left\{{\bf q} \in \cP(\Omega) \mid M > 2k-1\right\} \subset \cA^r\] are actually invariant under the (forward) flow generated by the mean-field dynamical system \eqref{eqn:ODE_main}. As a concrete example, we have $(1/2,1/2,0,\cdots,0) \in \Omega^l$ and $(0,\cdots,0,1/2,1/2) \in \Omega^r$. On the other hand, determining which consensus profile (${\bf q}^l$ or ${\bf q}^r$) a generic initial datum ${\bf q}(0) \in \cP(\Omega) \setminus (\Omega^l \cup \Omega^r)$ is attracted to remains a challenging problem. Based on numerical experiments, we expect that $\lim_{t \to \infty} {\bf q}(t) = {\bf q}^l$ whenever $\sum_{n=0}^{k} q_n(0) = 1$ and $q_k(0) < 1$ (i.e., when initially no agents hold an opinion strictly larger than $k$, and the initial state is not already the consensus at opinion $k$). Similarly, we also expect that $\lim_{t \to \infty} {\bf q}(t) = {\bf q}^r$ whenever $\sum_{n=k}^{2k} q_n(0) = 1$ and $q_k(0) < 1$.
\end{remark}

\section{Conclusion}\label{sec:sec4}
\setcounter{equation}{0}

In this work, we propose and analyze a novel agent-based opinion dynamics in which each agent holds an opinion belonging to the discrete set of admissible states $\{-k, \ldots, 0, \ldots, k\}$, where $k \in \mathbb{N}_+$ is a prescribed model parameter. Our model adds to the growing body of opinion dynamics research in the sociophysics literature, featuring an interaction rule in which agents tend to adjust their opinions toward those of their interaction partners, with more extreme agents exerting stronger persuasive influence. We derive the corresponding mean-field limit, obtaining a coupled nonlinear system of ODEs that governs the large-population behavior of the model. This limiting system admits three equilibrium configurations/distributions, among which the two consensus states at the extreme opinions are stable. We also investigate the large-time behavior of the resulting mean-field opinion dynamics, focusing on the class of initial opinion profiles that satisfies suitable constraints on the initial average opinion. For such configurations, we establish precise convergence results and characterize the attracting consensus equilibria.

The present paper also leaves several important questions open for future investigation. First, can one obtain a complete and explicit description of the basins of attraction associated with the two consensus equilibria at the extreme opinions? In particular, given an arbitrary (non-symmetric) initial datum, is it possible to determine precisely which consensus state the solution of the mean-field opinion dynamics will converge to? Second, can we quantify the rate of convergence to equilibrium for a generic (non-symmetric) initial datum? While a quantitative convergence guarantee---such as Theorem~\ref{thm:conv_k=1}---can be obtained in the simplest case $k = 1$, deriving analogous quantitative estimates for general $k \geq 2$ appears to be considerably more challenging. A rigorous theoretical treatment of these problems would provide a deeper understanding of how the initial opinion profile, and potentially the model parameter $k$, influence the structure of the resulting equilibrium distribution of opinions. More broadly, such an analysis would contribute to the general theory of nonlinear dynamical systems with multiple stable equilibria, where the long-time behavior is highly sensitive to the geometry of the basins of attraction. Identifying these basins explicitly, or even partially, remains a central and often challenging task in the study of nonlinear evolution equations and multi-agent systems. In this sense, our model offers a concrete and tractable setting in which one can explore how microscopic interaction rules shape the global phase portrait, shedding light on how local biases or asymmetries in the initial data may steer the system toward qualitatively distinct macroscopic outcomes. Such insights could have broader implications not only for opinion dynamics, but also for other multi-agent systems in which competing stable states co-exist and the selection of an eventual equilibrium depends delicately on the initial configuration. \\

\noindent {\bf Acknowledgement~} Fei Cao gratefully acknowledges support from an AMS-Simons Travel Grant, administered by the American Mathematical Society with funding from the Simons Foundation. Xiaoqian Gong is supported by the National Science Foundation through Grant DMS-2418971.

\end{document}